\newtheorem{assumption}{Assumption}
\newtheorem{problem}{Problem}
\crefname{equation}{Eq.}{Eqs.}
\crefname{pluralequation}{Eqs.}{Eqs.}
\crefname{algorithm}{Algorithm}{Algorithm}
\crefname{figure}{Fig.}{Figs.}
\crefname{pluralfigure}{Figs.}{Figs.}
\crefname{section}{Sect.}{Sects.}
\crefname{pluralsection}{Sects.}{Sects.}
\crefname{table}{Table}{Table}
\crefname{pluraltable}{Tables}{Tables}
\crefname{definition}{Def.}{Def.}
\crefname{pluraldefinition}{Defs.}{Defs.}
\crefname{theorem}{Theorem}{Theorems}
\crefname{pluraltheorem}{Theorems}{Theorems}
\crefname{lemma}{Lemma}{Lemmas}
\crefname{plurallemma}{Lemmas}{Lemmas}
\crefname{example}{Example}{Example}
\crefname{pluralexample}{Examples}{Examples}
\crefname{problem}{Problem}{Problem}
\crefname{pluralproblem}{Problems}{Problems}
\crefname{assumption}{Assumption}{Assumption}
\crefname{pluralassumption}{Assumptions}{Assumptions}
\crefname{remark}{Remark}{Remark}
\crefname{pluralremark}{Remarks}{Remarks}
\crefname{proposition}{Proposition}{Proposition}
\crefname{pluralproposition}{Propositions}{Propositions}
\crefname{corollary}{Corollary}{Corollary}
\crefname{pluralcorollary}{Corollaries}{Corollaries}
\crefname{appendix}{Appendix}{Appendices}
\crefname{pluralappendix}{Appendices}{Appendices}
\newcommand*{\NN}{\mathbb{N}}
\newcommand*{\RR}{\mathbb{R}}
\newcommand{\system}{\mathcal{S}}
\newcommand*{\policy}{\ensuremath{\pi}}
\newcommand*{\policySpace}{\ensuremath{\Pi}}
\newcommand{\satprob}{\mathrm{Pr}}
\newcommand{\Prob}{\mathbb{P}}
\newcommand*{\distr}[1]{\Delta(#1)}
\newcommand{\tuple}[1]{\ensuremath{( #1 )}}
\newcommand*{\States}{\ensuremath{S}}
\newcommand*{\Actions}{\ensuremath{Act}}
\newcommand*{\initState}{\ensuremath{s_I}}
\newcommand*{\transfuncImdp}{\ensuremath{\mathcal{P}}}
\newcommand*{\IMDP}{\ensuremath{\tuple{\States,\Actions,\initState,\transfuncImdp}}}
\newcommand*{\imdp}{{\ensuremath{\mathcal{M}_\Interval}}}
\newcommand*{\scheduler}{\ensuremath{\sigma}}
\newcommand*{\Scheduler}{\ensuremath{\mathfrak{S}}}
\newcommand*{\schedulerSpace}{\ensuremath{\Scheduler}}
\newcommand*{\Interval}{\ensuremath{\mathbb{I}}}
\newcommand*{\plow}{\ensuremath{\check{p}}}
\newcommand*{\pupp}{\ensuremath{\hat{p}}}
\newcommand*{\xGoal}{\ensuremath{X_G}}
\newcommand*{\xUnsafe}{\ensuremath{X_U}}
\newcommand*{\sGoal}{\ensuremath{\States_G}}
\newcommand*{\sUnsafe}{\ensuremath{\States_U}}
\newcommand*{\horizon}{\ensuremath{h}}
\DeclareMathOperator*{\argmax}{arg\,max}
\newcommand{\Pre}{\mathsf{Pre}}
\newcommand{\Bin}{B}
\newcommand*{\diag}[1]{\ensuremath{{\mathrm{diag}(#1)}}}
\title[Data-Driven Formal Policy Synthesis for Stochastic Systems]{Data-Driven Yet Formal Policy Synthesis for \\ Stochastic Nonlinear Dynamical Systems}
\author{%
 \Name{Mahdi Nazeri}${}^{1, 2}$ \Email{mahdi.nazeri@cs.ox.ac.uk}
 \AND
 \Name{Thom Badings}${}^1$ \Email{thom.badings@cs.ox.ac.uk}
 \AND
 \Name{Sadegh Soudjani}${}^2$ \Email{sadegh@mpi-sws.org}
 \AND
 \Name{Alessandro Abate}${}^1$ \Email{alessandro.abate@cs.ox.ac.uk}\\
 \addr ${}^1$Department of Computer Science, University of Oxford, Oxford, United Kingdom%
 \\
 \addr ${}^2$Max Planck Institute for Software Systems, Kaiserslautern, Germany%
}
\setlist[enumerate]{%
        itemsep=0em, 
        topsep=0mm plus 2mm}
\newif\ifappendix
\begin{document}
\maketitle%
\begin{abstract}%
The automated synthesis of control policies for stochastic dynamical systems presents significant challenges. A standard approach is to construct a finite-state abstraction of the continuous system, typically represented as a Markov decision process (MDP). However, generating abstractions is challenging when (1) the system's dynamics are nonlinear, and/or (2) we do not have complete knowledge of the dynamics. In this work, we introduce a novel data-driven abstraction technique for nonlinear Lipschitz continuous dynamical systems with additive stochastic noise that addresses both of these issues. As a key step, we use samples of the dynamics to learn the enabled actions and transition probabilities of the abstraction. We represent abstractions as MDPs with intervals of transition probabilities, known as interval MDPs (IMDPs). These abstractions enable the synthesis of policies for the concrete nonlinear system, with probably approximately correct (PAC) guarantees on the probability of satisfying a specified control objective. Our numerical experiments illustrate the effectiveness and robustness of our approach in achieving reliable control under uncertainty.
\end{abstract}%
\begin{keywords}%
Data-driven abstraction,
Nonlinear dynamical systems,
Stochastic systems,
Formal controller synthesis,
Markov decision processes%
\end{keywords}
\section{Introduction}
\label{sec:introduction}
Formal policy synthesis is an area of control theory focusing on designing controllers that provably meet specific requirements~\citep{belta2017formal}. One such requirement is the \emph{(stochastic) reach-avoid task}: Compute a (control) policy such that, with at least a specified probability, the system reaches a set of goal states while avoiding unsafe states~\citep{DBLP:conf/cav/FanMM018,DBLP:journals/automatica/SummersL10}.
The state-of-the-art in policy synthesis for stochastic systems is, arguably, to abstract the system into a finite-state model that appropriately captures its behaviour~\citep{LSAZ21,DBLP:journals/automatica/AbatePLS08,DBLP:books/daglib/0032856}.
However, conventional abstractions often rely on precise and explicit representations of the system's dynamics, which are unavailable in many cases.

Fuelled by increasing data availability and advances in machine learning, \emph{data-driven abstractions} have emerged as an alternative to conventional model-based abstractions~\citep{DBLP:conf/adhs/MakdesiGF21,DBLP:journals/csysl/CoppolaPM23,lavaei2023compositional,DBLP:journals/corr/abs-2206-08069,DBLP:journals/automatica/HashimotoSKUD22,DBLP:conf/cdc/DevonportSA21,DBLP:journals/corr/abs-2303-17618,DBLP:journals/csysl/PeruffoM23,schon2024data}.
By incorporating techniques from formal verification~\citep{BaierKatoen08}, temporal logic~\citep{DBLP:conf/focs/Pnueli77}, and reachability analysis~\citep{DBLP:journals/arcras/AlthoffFG21}, data-driven abstractions can be used to synthesise policies despite incomplete knowledge of the system dynamics.
However, with only a few recent exceptions~\citep{DBLP:conf/l4dc/GraciaBLL24,DBLP:journals/corr/abs-2412-11343, lavaei2022constructing, DBLP:conf/hybrid/JacksonLFL21}, these data-driven abstractions apply to nonstochastic systems only, thus leaving an important gap in the literature.

\newpage
In this paper, we study discrete-time dynamical systems whose dynamics are composed of a deterministic nonlinear term and an additive stochastic noise term.
Following a data-driven paradigm, we assume only (black-box) sampling access to the stochastic noise.
For the nonlinear term, we require sampling access plus partial knowledge in the form of knowing the Lipschitz constant.
Given such a system and a reach-avoid task, we focus on the following problem:
\emph{Compute a policy such that the reach-avoid task is satisfied with at least a specific threshold probability $\rho \in [0,1]$.}

We address this problem by abstracting the system into a finite-state Markov decision process (MDP)~\citep{DBLP:books/wi/Puterman94}.
Inspired by~\cite{DBLP:conf/aaai/BadingsRA023}, we define the abstract~actions via \emph{backward reachability computations} on the dynamical system.
However, the approach from~\cite{DBLP:conf/aaai/BadingsRA023} only applies to systems with linear dynamics and leads to overly conservative abstractions.
To overcome these limitations, we introduce two data-driven aspects in our approach:

\begin{enumerate}
    \item \textbf{Data-driven backward reachability analysis:}
    Performing backward reachability computations on nonlinear systems is generally challenging~\citep{DBLP:conf/hybrid/Mitchell07,DBLP:journals/corr/abs-2209-14076}.
    We develop a novel data-driven method to \emph{underapproximate} backward reachable sets based on forward simulations of the dynamical system.
    Our method only requires differentiability of the nonlinear dynamics and leads to sound underapproximations of backward reachable sets.
    \item \textbf{Data-driven probability intervals:}
    We use statistical techniques to compute \emph{probably approximately correct} (PAC) intervals of transition probabilities, which we capture in an \emph{interval MDP} (IMDP)~\citep{DBLP:journals/ai/GivanLD00,DBLP:journals/ior/NilimG05}.
    While~\cite{DBLP:conf/aaai/BadingsRA023} also uses sampling techniques (using the scenario approach~\citep{DBLP:journals/arc/CampiCG21,romao2022tac}), their intervals are very loose.
    We instead use the classical Clopper-Pearson confidence interval~\citep{clopper1934use}, yielding tighter intervals~\citep{DBLP:journals/corr/abs-2404-05424}.
\end{enumerate}

In summary, our main contribution is a novel data-driven IMDP abstraction technique for nonlinear stochastic systems with incomplete knowledge of the dynamics.
Due to the PAC guarantee on each individual probability interval of the IMDP, we can use our abstraction to synthesise policies with PAC reach-avoid guarantees.
We showcase our abstraction technique on multiple benchmarks.

\smallskip
\noindent\textbf{Related work.}
Control of nonlinear systems against temporal tasks is an active research area~\citep{khalil2002nonlinear,belta2017formal}.
Since computing optimal policies is generally infeasible~\citep{Bertsekas.Shreve78}, many model-based abstraction techniques have been developed, often representing abstractions as (I)MDPs~\citep{DBLP:journals/siamads/SoudjaniA13, DBLP:journals/tac/LahijanianAB15,FAUST15, Huijgevoort2023SySCoRe,mathiesen2024,DBLP:conf/hybrid/Delimpaltadakis23}.
Particularly related here is~\cite{DBLP:conf/l4dc/GraciaBLL24}, who generate data-driven abstractions of switched stochastic systems into robust MDPs, by estimating the (unknown) noise distribution as a Wasserstein ball.
However,~\cite{DBLP:conf/l4dc/GraciaBLL24} resorts to a model-based approach to abstract the deterministic part of the dynamics, while we use sampling instead.
Also closely related are~\cite{DBLP:journals/jair/BadingsRAPPSJ23,DBLP:conf/aaai/BadingsRA023}, who, however, require the (deterministic) dynamics to be linear and known.

Existing methods over/underapproximate backward reachable sets by level set functions~\citep{DBLP:conf/amcc/YinPAS19,DBLP:conf/eucc/StipanovicHT03}, approximating operators on, e.g., zonotopes~\cite{DBLP:journals/tcad/YangZJO22}, piecewise affine bounding of the dynamics~\citep{DBLP:journals/corr/abs-2209-14076}, and Hamilton-Jacobi reachability analysis~\citep{DBLP:conf/cdc/BansalCHT17}.
However, most approaches are computationally expensive and are model-based, whereas we focus on data-driven techniques to obtain sound underapproximations.

Apart from abstraction, others recently studied control of stochastic systems using Lyapunov-like functions learned represented as neural networks~\citep{DBLP:journals/csysl/MathiesenCL23,DBLP:conf/cav/AbateGR24,DBLP:conf/aaai/ZikelicLHC23}, or using robust and scenario optimization~\citep{DBLP:journals/automatica/SalamatiLSZ24,DBLP:journals/tac/NejatiLJSZ23}.

\section{Problem Formulation}
\label{sec:problem}

A probability space $\tuple{\Omega, \mathcal{F}, \Prob}$ consists of an uncertainty space $\Omega$, a $\sigma$-algebra $\mathcal{F}$, and a probability measure $\Prob \colon \mathcal{F} \to [0,1]$.
A random variable $z$ is a measurable function $z \colon \Omega \to \RR^n$ for some $n \in \NN$.
The set of all distributions for a (continuous or discrete) set $X$ is $\distr{X}$.
The Cartesian product of an interval is $[a,b]^n$, for $a \leq b$, $n \in \NN$.
The element-wise absolute value of $x \in \RR^n$ is written as $|x|$.

\smallskip
\noindent\textbf{Stochastic systems.\,}
Consider a discrete-time nonlinear system $\system$ with additive stochastic noise:%
\begin{equation}
    \system : \,\, x_{k+1} = f(x_{k}, u_{k}) + w_k, \quad x_0 = x_I,
    \label{eq:DTSS}
\end{equation}
where $x_k \in \RR^n$ and $u_k \in \mathcal{U} \subset \RR^p$ are the state and control input at discrete time step $k \in \NN$, where $\mathcal{U} \subset \RR^p$ is compact.
The (deterministic) dynamics function  $f \colon \RR^n \times \mathcal{U} \to \RR^n$ is also called \emph{nominal dynamics}, and $x_I \in \RR^n$ is the initial state.
Moreover, $w_0, w_1, \ldots$ is a sequence of independent and identically distributed (i.i.d.) random variables, defined on the same probability space $(\Omega,\mathcal{F},\Prob)$.\footnote{For brevity, we assume $\Omega$ is a subset of $\RR^n$ and directly write $w \in \Omega$ to say that the random variable takes a value.}

\begin{assumption}
    \label{assumptions}
    The transition function $f$ is differentiable with bounded first-order partial derivatives, and 
    the measure $\Prob$ is absolutely continuous w.r.t. the Lebesgue measure.
    However, $\Prob$ itself is unknown.
\end{assumption}

Thus, while $f$ and $\Prob$ can be unknown, our method requires: (1)~the noise $w_k$ being additive, (2)~knowledge of, e.g., the Lipschitz constant of $f$, and (3)~independent sampling access to $f$ and $w_k$.

The inputs $u_k \in \RR^n$ are chosen by a \emph{(Markovian) policy} $\policy \coloneqq (\policy_0, \policy_1, \policy_2, \ldots)$, where each $\policy_k \colon \RR^n \to \mathcal{U}$, $k \in \NN$, is a measurable map from states to inputs.
We denote the set of all policies by~$\policySpace^\system$.
Fixing a policy $\policy$ defines a Markov process in the probability space of all trajectories~\citep{Bertsekas.Shreve78,DBLP:books/wi/Puterman94}, whose probability measure we denote by $\Prob^\system_{\policy}$. 

Given a policy $\policy$, we are interested in the probability of reaching a \emph{goal set} $\xGoal \subseteq \RR^n$ within $\horizon \in \NN \cup \{\infty\}$ steps, while never reaching an \emph{unsafe set} $\xUnsafe \subseteq \RR^n$.\footnote{Formally, $\xGoal$ and $\xUnsafe$ must be Borel-measurable~\citep{Salamon16}, but we glance over measurability details here.}
We call the triple $(\xGoal,\xUnsafe,\horizon)$ a \emph{reach-avoid specification}.
The \emph{reach-avoid probability} $\satprob^\system_{\policy}(\xGoal, \xUnsafe,\horizon)$ for this specification is
\begin{equation}
\begin{split}
    \label{eq:reachavoid_prob}
    \satprob^\system_{\policy}(\xGoal, \xUnsafe,\horizon) \coloneqq
    \Prob^\system_{\policy} \big\{ 
    \exists k & \in \{0,\ldots,\horizon\} \, : \,
    x_k \in \xGoal 
    \, \wedge
    ( \forall k' \in \{0,\ldots,k\} : x_{k'} \notin \xUnsafe )
    \big\}.
\end{split}
\end{equation}

\noindent We now have all the ingredients to formalise the problem that we wish to solve:
\begin{problem}
    \label{prob:Problem1}
    Suppose we are given a dynamical system $\system$, a reach-avoid specification $(\xGoal,\xUnsafe,\horizon)$, and a threshold probability $\rho \geq 0$.
    Compute a policy $\policy \in \policySpace^\system$ such that $\satprob^\system_{\policy}(\xGoal, \xUnsafe, \horizon) \geq \rho$.
\end{problem}

\smallskip
\noindent\textbf{Interval MDPs.}
We will abstract system $\system$ into an MDP with intervals of transition probabilities, known as an interval MDP (IMDP).
For an introduction to IMDPs, we refer to~\cite{Suilen2025}.
\begin{definition}[IMDP]
\label{def:IMDP}
An \emph{interval MDP} (IMDP) $\imdp$ is a tuple $\imdp \coloneqq \IMDP$, where
$\States$ is a finite set of states, $\initState \in S$ is the initial state,
$\Actions$ is a finite set of actions, with $\Actions(s) \subseteq \Actions$ the actions enabled in state $s \in S$, and
$\transfuncImdp \colon \States \times \Actions \rightharpoonup 2^{\distr{\States}}$ is a transition function\footnote{To model that not all actions may be enabled in a state, the transition function $\transfuncImdp$ is a partial map, denoted by $\rightharpoonup$.} defined for all $s \in \States, a \in \Actions(s)$ as
$
    \transfuncImdp(s,a) = \big\{\mu \in \distr{\States} : 
    \forall s' \in \States, \,\, 
    \mu(s') \in [\plow(s,a,s'), \pupp(s,a,s')] \subset [0,1]
    \big\}.
$
\end{definition}

Without loss of generality, we assume that $\transfuncImdp(s,a) \neq \emptyset$ for all $s \in \States$, $a \in \Actions(s)$.
We also call $[\plow(s,a,s'), \pupp(s,a,s')] \subseteq [0,1]$ the \emph{probability interval} for transition $(s,a,s')$.
Actions in an IMDP are chosen by a (Markovian) \emph{scheduler}\footnote{For clarity, we use the word \emph{scheduler} for (finite) IMDPs, whereas we use \emph{policy} for (continuous) dynamical systems.} $\scheduler = (\scheduler_0, \scheduler_1, \ldots)$, where each $\scheduler_k \colon \States \to \Actions$ is defined such that $\scheduler_k(s) = a \implies a \in \Actions(a)$.
The set of all Markov schedulers for $\imdp$ is denoted by $\schedulerSpace^\imdp$.

\newpage
An IMDP can be interpreted as a game between a scheduler that chooses actions and an \emph{adversary} that fixes distributions $P(s,a) \in \transfuncImdp(s,a)$ for all $s \in \States$, $a \in \Actions(s)$.
We assume a different probability can be chosen every time the same pair $(s,a)$ is encountered (called the \emph{dynamic} uncertainty model~\citep{DBLP:journals/mor/Iyengar05}).
We overload notation and write $P \in \transfuncImdp$ for fixing an adversary.

Fixing $\scheduler \in \schedulerSpace^\imdp$ and $P \in \transfuncImdp$ for $\imdp$ yields a Markov chain with (standard) probability measure $\Prob^\imdp_{\scheduler,P}$~\citep{BaierKatoen08}.
A reach-avoid specification for $\imdp$ is a tuple $(\sGoal,\sUnsafe,\horizon)$ of goal and unsafe states $\sGoal, \sUnsafe \subseteq \States$ and a horizon $\horizon \in \NN \cup \{\infty\}$ (we use this notation later in~\cref{subsec:IMDP_definition}).
The probability of satisfying this specification is written as $\satprob^\imdp_{\scheduler,P}(\sGoal, \sUnsafe,\horizon)$, and is defined based on $\Prob^\imdp_{\scheduler,P}$ analogously to \cref{eq:reachavoid_prob}.
An optimal (robust) scheduler $\scheduler^\star \in \schedulerSpace^\imdp$ is defined as
\begin{equation}
    \label{eq:optimal_policy}
    \scheduler^\star \in \argmax\nolimits_{\scheduler \in \schedulerSpace^\imdp} \, \min\nolimits_{P \in \transfuncImdp} \satprob^\imdp_{\scheduler,P}(\sGoal, \sUnsafe,\horizon).
\end{equation}
In practice, $\scheduler^\star$ can be computed using, e.g., robust value iteration~\citep{DBLP:conf/cdc/WolffTM12,DBLP:journals/mor/Iyengar05}.

\section{Finite-State IMDP Abstraction}
\label{sec:abstraction}

We present an abstraction of the system $\system$ into a finite IMDP.
Our approach extends~\cite{DBLP:conf/aaai/BadingsRA023} from \emph{linear} to \emph{nonlinear} systems, which has fundamental consequences for the practical computability of the abstraction.
For clarity, we first define the IMDP's states, actions, and transition function in this section, while we defer our novel contributions to compute these to \cref{sec:underapproximation,sec:probability_intervals}.

\smallskip
\noindent\textbf{Partition.\,}
Let $\mathcal{X} \subset \RR^n$ be a compact subset of the state space we want to capture by the abstraction.
We create a partition\footnote{The sets $\{R_1,\ldots,R_v\}$ form a partition of $\mathcal{X}$ if their union covers $\mathcal{X}$ and the interiors of all elements $R_i$ are disjoint.} of $\mathcal{X}$ into $v \in \NN$ convex polytopes $\{R_1, R_2, \ldots, R_v\}$, such that each region is defined as $R_i = \{x \in \RR^n : M_i x \leq b_i \}$, where $M_i \in \RR^{\xi_i \times n}$, $b_i \in \RR^{xi_i}$,~$\xi_i \in \NN$.
We append one element $R_\star = \RR^n \setminus \mathcal{X}$ to the partition, called the \emph{absorbing region}, which represents all states outside of $\mathcal{X}$.
Thus, the collection $\{R_1, R_2, \ldots, R_v\} \cup \{ R_\star \}$ covers the entire state space.

\begin{definition}[Scaled polytope]
    \label{def:scaled_polytope}
    Let $d_i \in R_i$ be the centre\footnote{In fact, we can choose any $d_i \in R_i$, but the centre is often convenient in practice. %
    } point of the region $R_i$.
    The convex polytope $R_i(\lambda) \subset \RR^n$ is defined as the version of $R_i$ scaled around $d_i$ by a factor of $\lambda \geq 0$, i.e.,
    $
    R_i(\lambda) = \left\{
    x \in \RR^n \mid M_ix \leq \lambda(b_i - M_i d_i) + M_i d_i
    \right\}.
    $
\end{definition}

Thus, for $\lambda=1$, the scaled region is the same, i.e., $R_i(1,d) = R_i$. 
Similarly, for any $\lambda < 1$, we have $R_i(\lambda,d_i) \subseteq R_i$.
We will use scaled regions later to define the abstract actions.

\subsection{Abstract IMDP definition}
\label{subsec:IMDP_definition}
\noindent\textbf{States.\,}
We define one IMDP state $s_i$ for each element $R_i$, plus one \emph{absorbing state} for $R_\star$, resulting in $S \coloneqq \{ s_1,\ldots,s_v, s_\star \}$.
An \emph{abstraction function} maps between continuous and~abstract~states:

\begin{definition}[Abstraction function]
    \label{def:abstraction_function}
    The \emph{abstraction function} $\mathcal{T} \colon \RR^n \to \States$ is defined as $\mathcal{T}(x) = s_i$ if $x \in R_i$.\footnote{If $x$ is on the boundary of two regions $R_i$, $R_j$, we arbitrarily choose $\mathcal{T}(x) = s_i$ or $\mathcal{T}(x) = s_i$. However, \cref{assumptions} implies that this occurs with probability zero, so this arbitrary choice does not affect the correctness of our algorithm.}
    We also define the preimage of $s_i$ under $\mathcal{T}$ as $\mathcal{T}^{-1}(s_i) = R_i$ for all $i=1,\ldots,v$.
\end{definition}

In other words, the IMDP state $s_i$ represents all continuous states $x \in R_i$.
The initial IMDP state is then defined as $\initState \coloneqq \mathcal{T}(x_I)$.
We map the reach-avoid specification $(\xGoal,\xUnsafe,\horizon)$ to the abstract IMDP by under- and overapproximating the goal and unsafe states, respectively, as
$\sGoal \coloneqq \{ s \in \States : \forall x \in \mathcal{T}^{-1}(s), \,\, x \in \xGoal \}$, and
$\sUnsafe \coloneqq \{ s \in \States : \exists x \in \mathcal{T}^{-1}(s), \,\, x \in \xUnsafe \}$.

\newpage
\noindent\textbf{Actions.\,}
Each IMDP action does not represent a single input $u_k \in \mathcal{U}$ (as is typically done in abstraction) but \emph{a collection} of inputs leading to a \emph{common state $x_{k+1}$}.
Without loss of generality, we define one action for each IMDP state (except $s_\star$), such that $\Actions \coloneqq \{a_1, \ldots, a_v\}$.
We then associate every pair $(s_i, a_j) \in \States \times \Actions$ with a so-called \emph{target set} $R_j(\lambda_{i \to j})$ that represents region $R_j$ scaled by a factor $\lambda_{i \to j} \geq 0$; see \definitionref{def:scaled_polytope}.
We discuss how we compute each factor $\lambda_{i \to j}$ in \cref{sec:underapproximation}.

Suppose the system's state is $x \in \RR^n$, associated with IMDP state $\mathcal{T}(x) = s_i$.
Choosing action $a_j \in \Actions$ in $s_i$ corresponds to choosing an input $u \in \mathcal{U}$ such that $f(x,u) \in R_j(\lambda_{i \to j})$.
Thus, this IMDP action $a_j$ must only be enabled in the state $s_i$ if for all $x \in \mathcal{T}^{-1}(s_i)$, there exists such an input $u \in \mathcal{U}$ for which $f(x,u)$ is contained in the target set $R_j(\lambda_{i \to j})$ of the state-action pair $(s_i, a_j)$.
To formalise this requirement, let $\Pre(X)$ be the \emph{backward reachable set} for a set $X \subset \RR^n$:
\begin{equation}
    \Pre(X) = %
    \left\{x' \in \RR^n \mid \exists u \in \mathcal{U} : f(x', u) \in X \right\}.
    \label{eq:BRS_set}
\end{equation}
Then, for every state $s_i \in \States$, the set of enabled actions $\Actions(s_i)$ is defined as
\begin{equation}
\begin{split}
    \label{eq:enabled_actions}
    \Actions(s_i) = \{
    a_j \in \Actions : 
    \exists \lambda_{i\to j} \in [0,\Lambda], \,\, \mathcal{T}^{-1}(s_i) \subseteq \Pre(R_j(\lambda_{i\to j}))
    \},
\end{split}
\end{equation}
where $\Lambda \in \RR_{\geq 0}$ is a global hyperparameter that prevents $R_j(\lambda_{i \to j})$ from becoming too large.

Computing $\Pre(\cdot)$ exactly is challenging for nonlinear dynamics~\citep{DBLP:conf/hybrid/Mitchell07,DBLP:journals/corr/abs-2209-14076}.
However, any \emph{underapproximation} preserves correction of our abstraction (albeit increasing conservatism).
In \cref{sec:underapproximation}, we present a data-driven method to compute such underapproximations.

\smallskip
\noindent\textbf{Transition function.\,}
Due to the stochastic noise in system $\system$, choosing  the IMDP action $a_j$ in IMDP state $s_i \in \States$ leads to the continuous successor state $f(x,u) + w \in \RR^n$, where $f(x,u) \in R_j(\lambda_{i\to j})$.
That is, for every possible $\hat{x} \in R_j(\lambda_{i\to j})$, we obtain a \emph{different probability distribution} over the continuous successor state $\hat{x} + w$.
Mathematically, let $\eta(\hat{x}, \bar{X}) \in [0,1]$ denote the probability that $\hat{x} + w$ is contained in a compact set $\bar{X} \subset \RR^n$, i.e.
$
    \eta(\hat{x}, \bar{X}) = \Prob \left\{ w \in \Omega : \hat{x} + w \in \bar{X} \right\}.
$
Then, the IMDP transition function $\transfuncImdp$ is defined for all $(s_i,a_j)$ by taking the min/max over $\eta$ as follows:
{\small
\begin{align}
    \transfuncImdp(s_i,a_j) = \Big\{\mu \in \distr{\States} : 
    \forall s' \in \States, \,\, & 
    \min_{\hat{x} \in R_j(\lambda_{i\to j})} \!\! \eta \big( \hat{x}, \mathcal{T}^{-1}(s') \big) 
    \leq \mu(s') 
    \leq 
    \max_{\hat{x} \in R_j(\lambda_{i\to j})} \!\! \eta \big( \hat{x}, \mathcal{T}^{-1}(s') \big)
    \Big\}.
    \label{eq:transition_function}
\end{align}
}
In \cref{sec:probability_intervals}, we will compute these bounds using samples of the noise $w \in \Omega$.

\subsection{Correctness of the abstraction}
In \cref{subsec:IMDP_definition}, we defined an IMDP abstraction $\imdp = \IMDP$ of system $\system$.
As is common in abstraction-based control, any IMDP scheduler $\scheduler \in \schedulerSpace^\imdp$ can be \emph{refined} into a policy $\policy \in \policySpace^\system$ for system $\system$.
Crucially, the reach-avoid probability for $\scheduler$ on $\imdp$ is a \emph{lower bound} on that for $\policy$ on $\system$:

\begin{theorem}[Policy synthesis~\citep{DBLP:journals/jair/BadingsRAPPSJ23}]
    \label{thm:existence_controller_main}
    Let $\imdp$ be the IMDP abstraction for dynamical system $\system$.
    For every IMDP scheduler $\scheduler \in \schedulerSpace^\imdp$, there exists a policy $\policy \in \policySpace^\system$ for $\system$~such~that
    \begin{equation}
        \label{eq:existence_controller_main}
        \min_{P \in \transfuncImdp} \satprob^\imdp_{\scheduler,P}(\sGoal, \sUnsafe,\horizon)
        \, \leq \,
        \satprob^\system_{\policy}(\xGoal, \xUnsafe,\horizon).
    \end{equation}
\end{theorem}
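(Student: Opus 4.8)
The plan is to prove the inequality by \emph{explicit refinement} of the scheduler into a policy, followed by a \emph{coupling} argument between the concrete Markov process induced by $\policy$ and the abstract robust Markov chain induced by $\scheduler$ under the worst-case adversary. First I would construct $\policy$ from $\scheduler$ step by step: at time $k$, given the continuous state $x_k$ with $\mathcal{T}(x_k) = s_i$, let $a_j = \scheduler_k(s_i)$. Since $a_j \in \Actions(s_i)$, the enabling condition \cref{eq:enabled_actions} guarantees $\mathcal{T}^{-1}(s_i) \subseteq \Pre(R_j(\lambda_{i\to j}))$, so by \cref{eq:BRS_set} there exists an input $u \in \mathcal{U}$ with $f(x_k,u) \in R_j(\lambda_{i\to j})$; I set $\policy_k(x_k) \coloneqq u$. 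Here I would invoke a measurable selection theorem, using compactness of $\mathcal{U}$ and the regularity in \cref{assumptions}, to guarantee that $\policy_k$ is a measurable map.

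The heart of the argument is a backward induction on the number of remaining steps $t \in \{0, \ldots, \horizon\}$. Define the concrete value $W_t^\system(x)$ as the reach-avoid probability of $\policy$ from $x$ with $t$ steps left, and the abstract value $W_t^\imdp(s)$ by the standard robust value-iteration recursion for the fixed scheduler $\scheduler$ (minimising over adversaries $\mu \in \transfuncImdp(s,\scheduler_t(s))$). I would establish the invariant $W_t^\system(x) \geq W_t^\imdp(\mathcal{T}(x))$ for all $x$ and $t$. The base case $t=0$ follows from the \emph{conservative} definitions of the abstract specification sets: $s \in \sGoal$ forces $\mathcal{T}^{-1}(s) \subseteq \xGoal$, while $s \notin \sUnsafe$ forces $\mathcal{T}^{-1}(s) \cap \xUnsafe = \emptyset$, so the abstract reach-avoid event implies the concrete one pointwise, giving domination at the horizon.

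For the inductive step, fix $x$ with $\mathcal{T}(x)=s_i$ that is neither goal nor unsafe, and write $u = \policy_t(x)$, $a_j = \scheduler_t(s_i)$. Conditioning on the successor $f(x,u)+w$ and applying the induction hypothesis $W_{t-1}^\system(\cdot) \geq W_{t-1}^\imdp(\mathcal{T}(\cdot))$, together with the fact that $W_{t-1}^\imdp$ is constant on each region, yields $W_t^\system(x) \geq \sum_{s'} \eta\big(f(x,u), \mathcal{T}^{-1}(s')\big)\, W_{t-1}^\imdp(s')$. The key observation is that, because $f(x,u) \in R_j(\lambda_{i\to j})$, the distribution $\mu(s') \coloneqq \eta\big(f(x,u), \mathcal{T}^{-1}(s')\big)$ satisfies the bounds of \cref{eq:transition_function}, i.e.\ $\mu \in \transfuncImdp(s_i,a_j)$. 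Hence the concrete one-step expectation dominates the worst-case adversary, $\sum_{s'}\mu(s')W_{t-1}^\imdp(s') \geq \min_{\mu' \in \transfuncImdp(s_i,a_j)} \sum_{s'} \mu'(s')\,W_{t-1}^\imdp(s') = W_t^\imdp(s_i)$, closing the induction. Evaluating the invariant at $x = x_I$ (so that $\mathcal{T}(x_I) = \initState$) and $t = \horizon$ gives exactly \cref{eq:existence_controller_main}.

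I anticipate two obstacles. The technical one is the measurable selection of $u = \policy_t(x)$, needed so that the induced concrete process is well defined. The conceptual crux is the interval-containment claim $\mu \in \transfuncImdp(s_i,a_j)$: this is precisely where the min/max construction of \cref{eq:transition_function} over the target set $R_j(\lambda_{i\to j})$ is designed to pay off, since every point of the target set — in particular $f(x,u)$ — induces a distribution lying within the probability intervals. Finally, for the infinite-horizon case $\horizon = \infty$, I would pass to the limit $t \to \infty$ via monotone convergence of the bounded, monotone value functions, extending the finite-horizon inequality to the limit.
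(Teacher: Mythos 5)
Your proof is correct, and its two load-bearing ingredients are exactly the ones the paper relies on; the difference is in packaging. The paper (\cref{app:correctness}) proves the theorem by first defining a \emph{probabilistic alternating simulation relation} $\imdp \preceq_\mathcal{T} \system$, showing in a short lemma that the abstraction satisfies it, and then invoking --- by citation to prior work rather than by proof --- the fact that such a relation preserves (with inequality) the satisfaction probability of PCTL specifications, which subsume reach-avoid. Your interval-containment observation ($f(x,u) \in R_j(\lambda_{i\to j})$ implies the induced successor distribution lies in $\transfuncImdp(s_i,a_j)$ by the min/max construction of \cref{eq:transition_function}, with the input $u$ supplied by the enabling condition \cref{eq:enabled_actions}) is precisely the content of that lemma; your backward induction on the value functions $W_t^\system$ and $W_t^\imdp$ is an inlined, specification-specific proof of the preservation result that the paper leaves to the reference. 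What your route buys is self-containedness and an explicit accounting of the technical debts (measurable selection for $\policy_k$ --- which the paper in fact discharges constructively, since the refined policy of \cref{eq:refined_controller} is chosen constant on each voxel --- and the monotone-convergence passage to $\horizon = \infty$). What the paper's route buys is brevity and generality: the simulation relation yields preservation for all PCTL properties at once, not just reach-avoid. One small point worth tightening in your write-up: for $\mu(s') = \eta\big(f(x,u),\mathcal{T}^{-1}(s')\big)$ to be a distribution over $\States$ you need the regions (including $R_\star$) to partition $\RR^n$ up to sets of measure zero, which is where the absolute continuity in \cref{assumptions} enters.
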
 

As we discuss 
\ifappendix
    in~\cref{app:correctness},
\else
    in~\citet[App.~A]{L4DC_with_appendix},
\fi
\theoremref{thm:existence_controller_main} is based on a probabilistic extension of an \emph{alternating simulation relation}~\citep{DBLP:conf/concur/AlurHKV98},
The policy $\policy \in \policySpace^\system$ for which \theoremref{thm:existence_controller_main} holds can be derived recursively, by choosing inputs at every step $k \in \NN$ such that this relation is preserved.
Concretely, the \emph{refined policy} $\policy = (\policy_0, \policy_1, \ldots)$ for system $\system$ is defined for all $x \in \RR^n$ and $k \in \NN$ as
\begin{equation}
    \label{eq:refined_controller}
    \policy_k(x) \in \left\{ u \in \mathcal{U} : f(x,u) \in R_j(\lambda_{i \to j}) \right\},
\end{equation}
where $s_i = \mathcal{T}(x)$ and $a_j = \scheduler_k(s_i)$ are the current IMDP state and action.
In \cref{sec:underapproximation}, we discuss how we obtain $\policy_k(x)$ directly from the data-driven underapproximation of $\Pre(R_j(\lambda_{i \to j}))$.

\section{Data-Driven Underapproximations of Backward Reachable Sets}
\label{sec:underapproximation}

In this section, we compute the enabled actions $\Actions(s_i) \subseteq \Actions$ in each IMDP state $s_i \in \States$.
Recall from \cref{eq:enabled_actions} that action $a_j \in \Actions$ is enabled in state $s_i \in \States$ if $\mathcal{T}^{-1}(s_i) \subseteq \Pre(R_j(\lambda_{i \to j}))$.
As a key contribution, we present a data-driven method to compute the scaling factor $\lambda_{i \to j}$ and an underapproximation of $\Pre(R_j(\lambda_{i \to j}))$ for all $s_i \in \States$ and $a_j \in \Actions(s_i)$.

\smallskip
\noindent\textbf{Data collection.}
The core idea is to underapproximate each set $\Pre(R_j(\lambda_{i \to j}))$, based on forward simulations of the nominal dynamics function $\hat{x}_{\ell} = f(x_\ell, u_\ell)$.
Since we assumed sampling access to $f$, we can easily obtain such a set of samples.
Let us denote the resulting set of $K \in \NN$ samples by
\begin{equation}
    \label{eq:dataset}
    \mathcal{D}_K = \left\{ (x_\ell,u_\ell,f(x_\ell,u_\ell)) : \ell = 1,\ldots,K, \, x_\ell \in \mathcal{X}, \, u_\ell \in \mathcal{U} \right\}.
\end{equation}
Without loss of generality, we assume to obtain these samples by a uniform gridding of $\mathcal{X}$~and~$\mathcal{U}$.
While more sophisticated approaches may lead to better results, we leave this for future work.

We describe how we use the dataset $\mathcal{D}_K$ to underapproximate $\Pre(R_j(\lambda_{i \to j}))$ for a fixed $s_i \in \States$ and $a_j \in \Actions$.
We repeat this procedure for all other state-action pairs to compute all enabled actions.

\smallskip
\noindent\textbf{Fixing $\lambda_{i \to j}$ upfront.}
Consider a fixed value for $\lambda_{i \to j} > 0$, which fixes the target set $R_j(\lambda_{i \to j})$ that defines the semantics of action $a_j$.
By definition, the $x$-component of every sample $(x,u,x') \in \mathcal{D}_K$ for which $x' \in R_j(\lambda_{i \to j})$ is contained in $\Pre(R_j(\lambda_{i \to j}))$.
Moreover, due to the differentiability of the dynamics (see \cref{assumptions}), there exists a region $Y$ around $x$ such that, for all $y \in Y$, $f(y,u)$ is also contained in $R_j(\lambda_{i \to j})$.
Thus, this region $Y$ is also contained in $\Pre(R_j(\lambda_{i \to j}))$.

For a fixed input $\hat{u}$, the Jacobian of $f(x, \hat{u})$ is the matrix $J \in \RR^{n \times n}$, whose entries are defined as $J_{pq} = \frac{\partial f(x, \hat{u})_p}{\partial x_q}$, $p,q = 1,\ldots,n$.
We define $J^+(R_i) \in \RR^{n \times n}$ as the matrix whose entries $J^+(R_i)_{pq} \in \RR_{\geq 0}$ are defined as the supremum over the absolute value of $J_{pq}(x)$ for all $x \in R_i$, i.e.,
$
    J^+(R_i)_{pq} = \sup \left\{ |J_{pq}(x)| : x \in R_i \right\}.
$
We use the matrix $J^+(R_i)$ to derive the following theorem.
\begin{theorem}
    \label{thm:Jacobian}
    For all $x_1, x_2 \in R_i$ it holds that
    $
        |f(x_1, u_\ell) - f(x_2, u_\ell)| \leq J^+(R_i) \cdot |x_1 - x_2|.
    $
\end{theorem}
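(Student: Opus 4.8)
The plan is to reduce the vector inequality to a component-wise statement and then bound each component via the fundamental theorem of calculus along the straight line segment joining $x_1$ and $x_2$. The crucial structural fact I will exploit is that $R_i$ is a convex polytope, so for any $x_1, x_2 \in R_i$ the entire segment $\xi_t \coloneqq x_2 + t(x_1 - x_2)$, $t \in [0,1]$, lies inside $R_i$. This is precisely what lets me keep every intermediate evaluation point inside the region over which the supremum defining $J^+(R_i)$ is taken.

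First I would fix a component index $p \in \{1,\ldots,n\}$ and introduce the scalar auxiliary function $g_p(t) \coloneqq f(\xi_t, u_\ell)_p$ for $t \in [0,1]$. By \cref{assumptions}, $f$ is differentiable with bounded first-order partial derivatives, so $g_p$ is differentiable on $[0,1]$ with bounded derivative; hence $g_p$ is Lipschitz and, in particular, absolutely continuous, so the fundamental theorem of calculus applies to it. By the chain rule, $g_p'(t) = \sum_{q=1}^n J_{pq}(\xi_t)\,(x_1 - x_2)_q$, and therefore
\begin{equation*}
    f(x_1, u_\ell)_p - f(x_2, u_\ell)_p = g_p(1) - g_p(0) = \int_0^1 \sum_{q=1}^n J_{pq}(\xi_t)\,(x_1 - x_2)_q \,\mathrm{d}t.
\end{equation*}

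Next I would take absolute values and pull them inside the integral and the sum using the triangle inequality. Here I invoke convexity: since $\xi_t \in R_i$ for every $t \in [0,1]$, the definition $J^+(R_i)_{pq} = \sup\{|J_{pq}(x)| : x \in R_i\}$ gives $|J_{pq}(\xi_t)| \leq J^+(R_i)_{pq}$ uniformly in $t$. Substituting this (now $t$-independent) bound and integrating over $[0,1]$ collapses the integral to $1$, leaving $|f(x_1,u_\ell)_p - f(x_2,u_\ell)_p| \leq \sum_{q=1}^n J^+(R_i)_{pq}\,|(x_1 - x_2)_q| = \big(J^+(R_i)\cdot |x_1 - x_2|\big)_p$. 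Since $p$ was arbitrary, this yields exactly the claimed element-wise inequality.

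I expect the only delicate point to be the regularity justification for applying the fundamental theorem of calculus: I must confirm that bounded first-order partial derivatives render $g_p$ absolutely continuous rather than merely differentiable, so that the integral representation is valid. This follows because a function with bounded derivative on a compact interval is Lipschitz, hence absolutely continuous. An alternative that sidesteps absolute continuity would be to apply the multivariate mean value theorem component-wise, but the integral form is cleaner and avoids having to track a distinct mean-value point for each component $p$.
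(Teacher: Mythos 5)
Your proof is correct and follows essentially the same route as the paper's: both reduce the claim to a component-wise inequality, exploit convexity of $R_i$ to keep every point of the segment between $x_1$ and $x_2$ inside the region, and bound each partial derivative by the entry of $J^+(R_i)$. The only cosmetic difference is that you integrate the derivative along the segment (fundamental theorem of calculus) where the paper invokes the mean value theorem to produce a single intermediate point per component; the regularity issue you flag is handled correctly and both arguments are equally valid.
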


\begin{figure}[t!]
\centering
\begin{tikzpicture}[scale=1, font=\footnotesize]
    \draw[step=1cm,gray,dashed] (0.8,0.8) grid (4.2,3.2);

    \fill[red!10] (1,1) rectangle (2,2) node [pos=0.5, yshift=0.4cm, xshift=-0.3cm, above, black] {$R_i$};

    \fill[orange!10] (3,2) rectangle (4,3) node [pos=0.5, yshift=0.5cm, above, orange!80!black] {$R_j(0.95)$};
    \fill[orange!80] (3.05,2.05) rectangle (3.95,2.95);

    \node (x1) [circle, fill=black, inner sep=1pt, minimum size=1pt]at (1.3,1.6) {};
    \node (x1plus) [circle, fill=black, inner sep=1pt, minimum size=1pt] at (3.3,2.7) {};
    \draw[->, thin, bend left] (x1) to (x1plus);
    \draw[fill, black, opacity=0.1] (1.05,1.35) rectangle (1.55,1.85);
    \draw[fill, black, opacity=0.1] (3.05,2.45) rectangle (3.55,2.95);
    
    \node (x2) [circle, fill=black, inner sep=1pt, minimum size=1pt]at (1.7,1.8) {};
    \node (x2plus) [circle, fill=black, inner sep=1pt, minimum size=1pt] at (3.65,2.6) {};
    \draw[->, thin, bend left] (x2) to (x2plus);
    \draw[fill, black, opacity=0.1] (1.4,1.5) rectangle (2.0,2.1);
    \draw[fill, black, opacity=0.1] (3.35,2.3) rectangle (3.95,2.90);

    \node (x3) [circle, fill=black, inner sep=1pt, minimum size=1pt]at (1.6,1.3) {};
    \node (x3plus) [circle, fill=black, inner sep=1pt, minimum size=1pt] at (3.55,2.4) {};
    \draw[->, thin, bend left] (x3) to (x3plus);
    \draw[fill, black, opacity=0.1] (1.25,0.95) rectangle (1.95,1.65) node [right, pos=0.5, opacity=1, xshift=0.2cm] {$A_j(x)$};
    \draw[fill, black, opacity=0.1] (3.2,2.05) rectangle (3.9,2.75) node [below, pos=0.5, opacity=1, yshift=-0.24cm] {$B^\infty_{r(\cdot)}(x')$};
    
    \draw[thick,->] (0.6,0.6) -- (4.2,0.6) node[right] {$x^{(1)}$};
    \draw[thick,->] (0.6,0.6) -- (0.6,3.2) node[left] {$x^{(2)}$};
\end{tikzpicture}%
\begin{tikzpicture}[scale=1, font=\footnotesize]
    \draw[step=1cm,gray,dashed] (0.8,0.8) grid (4.2,3.2);

    \fill[red!10] (1,1) rectangle (2,2) node [pos=0.5, yshift=0.4cm, xshift=-0.3cm, above, black] {$R_i$};

    \fill[orange!10] (3,2) rectangle (4,3) node [pos=0.5, yshift=0.5cm, above, orange!80!black] {$R_j(0.80)$};
    \fill[orange!80] (3.2,2.2) rectangle (3.8,2.8);

    \node (x1) [circle, fill=black, inner sep=1pt, minimum size=1pt]at (1.3,1.6) {};
    \node (x1plus) [circle, fill=black, inner sep=1pt, minimum size=1pt] at (3.3,2.7) {};
    \draw[->, thin, bend left] (x1) to (x1plus);
    \draw[fill, black, opacity=0.1] (1.2,1.5) rectangle (1.4,1.7);
    \draw[fill, black, opacity=0.1] (3.2,2.6) rectangle (3.4,2.8);
    
    \node (x2) [circle, fill=black, inner sep=1pt, minimum size=1pt]at (1.7,1.8) {};
    \node (x2plus) [circle, fill=black, inner sep=1pt, minimum size=1pt] at (3.65,2.6) {};
    \draw[->, thin, bend left] (x2) to (x2plus);
    \draw[fill, black, opacity=0.1] (1.55,1.65) rectangle (1.85,1.95);
    \draw[fill, black, opacity=0.1] (3.5,2.45) rectangle (3.8,2.75);

    \node (x3) [circle, fill=black, inner sep=1pt, minimum size=1pt]at (1.6,1.3) {};
    \node (x3plus) [circle, fill=black, inner sep=1pt, minimum size=1pt] at (3.55,2.4) {};
    \draw[->, thin, bend left] (x3) to (x3plus);
    \draw[fill, black, opacity=0.1] (1.4,1.1) rectangle (1.8,1.5) node [right, pos=0.5, opacity=1, xshift=0.1cm] {$A_j(x)$};
    \draw[fill, black, opacity=0.1] (3.35,2.2) rectangle (3.75,2.6) node [below, pos=0.5, opacity=1, yshift=-0.13cm] {$B^\infty_{r(\cdot)}(x')$};
    
    \draw[thick,->] (0.6,0.6) -- (4.2,0.6) node[right] {$x^{(1)}$};
    \draw[thick,->] (0.6,0.6) -- (0.6,3.2) node[left] {$x^{(2)}$};
\end{tikzpicture}%
\begin{tikzpicture}[scale=1.733, font=\footnotesize]
    \draw[step=1cm,gray,dashed] (0.8,0.8) grid (2.2,2.2);

    \fill[red!10] (1,1) rectangle (2,2) node [pos=0.5, yshift=0.8cm, xshift=-0.6cm, above, black] {$R_i$};

    \draw[step=.1cm,gray,densely dotted] (1,1) grid (2,2);

    \draw[fill, orange!80, draw=black] (1.1,1.8) rectangle (1.2,1.9) node[pos=0.5, right, black] {$\phi$};

    \node (x1) [circle, fill=black, inner sep=1pt, minimum size=1pt] at (1.3,1.6) {};
    \node (x3) [circle, fill=black, inner sep=1pt, minimum size=1pt] at (1.7,1.4) {};

    \draw[fill, black, opacity=0.1] (1.0,1.3) rectangle (1.6,1.9) node [right, pos=0.5, opacity=1, xshift=0cm] {$x_1$} node [right, pos=0.0, opacity=1, xshift=-0.1cm, yshift=0.2cm] {$A_j(x_1)$};
    \draw[fill, black, opacity=0.1] (1.1,0.8) rectangle (2.3,2.0) node [right, pos=0.5, opacity=1, xshift=0cm] {$x_2$} node [right, pos=0.0, opacity=1, xshift=-0.1cm, yshift=0.2cm] {$A_j(x_2)$};

    \draw[thick,->] (0.7,0.7) -- (2.2,0.7) node[right] {$x^{(1)}$};
    \draw[thick,->] (0.7,0.7) -- (0.7,2.2) node[left] {$x^{(2)}$};
\end{tikzpicture}%%
\vspace{-3em}%
\caption{
Three samples $(x,u,x')$ with $x' \in R_j(\lambda_{i \to j})$, and the balls $B_{r(\cdot)}^\infty(x')$ around each $x'$ and $A_j(x)$ around each $x$ (shown in gray) for fixed values of $\lambda_{i \to j} = 0.95$ (left) and $\lambda_{i \to j} = 0.80$ (middle). 
On the right, we show sets $A_j(x)$ for two samples such that voxel $\phi \in \Phi(R_i)$ is contained.
}
\label{fig:BRS}
\end{figure}

\noindent
We prove \theoremref{thm:Jacobian}
\ifappendix
    in~\cref{app:Jacobian}.
\else
    in~\citet[App.~B]{L4DC_with_appendix}.
\fi
Note that the maximum element of $J^+(R_i)$ is a local Lipschitz constant of $f$ with respect to changes in $x$, within the region $R_i$.
Thus, if computing the Jacobian $J$ is difficult, e.g., when $f$ is not known explicitly, we can use an (upper bound on the) Lipschitz constant instead.
We use \theoremref{thm:Jacobian} to underapproximate $\Pre(R_j(\lambda_{i\rightarrow j}))$ as follows:

\begin{definition}
    Let $(x,u,x') \in \mathcal{D}_K$ with $x \in R_i$, $x' \in R_j$. The radius $r_j(x', \lambda_{i \to j})$ of the largest $x'$-centered $L^\infty$-ball\footnote{The (open) $L^\infty$-ball $B^\infty_\epsilon(x')$ of size $\epsilon \geq 0$ centered at $x'$ is defined as $B^\infty_\epsilon(x) = \{y \in \RR^n : \lVert x'-y \rVert_\infty < \epsilon \}$.} contained in $R_j(\lambda_{i \to j})$ is $r_j(x', \lambda_{i \to j}) \coloneqq \max\{ \epsilon \geq 0 : B_\epsilon^\infty(x') \subseteq R_j(\lambda_{i \to j})\}$.    
\end{definition}

\begin{theorem}[Underapproximate backward reach. set]
    \label{thm:underapproximation}
    Fix $s_i \in \States$, $a_j \in \Actions$, {and $\lambda_{i \to j} \geq 0$.}
    Let $(x,u,x') \in \mathcal{D}_K$ be a sample with $x' \in R_j(\lambda_{i \to j})$ and define the set $A_j(x) \coloneqq \big\{ y \in R_i : {\big\lVert J^+(R_i) \cdot | x - y | \big\rVert_\infty} \leq r_j(x', \lambda_{i \to j}) \big\}$.
    Then, it holds that $A_j(x) \subseteq \Pre(R_j(\lambda_{i\rightarrow j}))$.
\end{theorem}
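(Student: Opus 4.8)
The plan is to verify the set inclusion $A_j(x) \subseteq \Pre(R_j(\lambda_{i \to j}))$ pointwise. I would fix an arbitrary $y \in A_j(x)$ and exhibit an input $u' \in \mathcal{U}$ witnessing $f(y,u') \in R_j(\lambda_{i \to j})$, which is exactly the membership condition in the definition of $\Pre$ in \cref{eq:BRS_set}. The natural candidate is to reuse the very input $u$ recorded in the sample $(x,u,x') \in \mathcal{D}_K$: we already know $f(x,u) = x' \in R_j(\lambda_{i \to j})$, and $y$ lies close to $x$ inside the same region $R_i$, so a Lipschitz-type bound should keep $f(y,u)$ close to $x'$ and hence inside $R_j(\lambda_{i \to j})$.

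First I would invoke \theoremref{thm:Jacobian} with the common input $u$. Since both $x$ and $y$ lie in $R_i$ (the former by the assumption on the sample, the latter because $A_j(x) \subseteq R_i$ by definition), the theorem yields the element-wise bound $|f(x,u) - f(y,u)| \leq J^+(R_i)\cdot|x-y|$. Substituting $f(x,u) = x'$ gives $|x' - f(y,u)| \leq J^+(R_i)\cdot|x-y|$. Both sides are vectors with non-negative entries (the left because of the absolute value, the right because $J^+(R_i)$ has non-negative entries), so the inequality is preserved when taking the componentwise maximum; that is, passing to the $\infty$-norm gives $\lVert x' - f(y,u) \rVert_\infty \leq \lVert J^+(R_i)\cdot|x-y| \rVert_\infty$. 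By the definition of the set $A_j(x)$, the right-hand side is at most $r_j(x',\lambda_{i \to j})$, so I conclude $\lVert x' - f(y,u) \rVert_\infty \leq r_j(x',\lambda_{i \to j})$.

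It remains to translate this distance bound into the membership $f(y,u) \in R_j(\lambda_{i \to j})$. By the definition of the radius, the open ball $B^\infty_{r_j(x',\lambda_{i \to j})}(x')$ is contained in $R_j(\lambda_{i \to j})$; since $R_j(\lambda_{i \to j})$ is a closed polytope, its closure contains the corresponding closed ball as well, and hence every point at $\infty$-distance at most $r_j(x',\lambda_{i \to j})$ from $x'$ lies in $R_j(\lambda_{i \to j})$. The point $f(y,u)$ satisfies exactly this, so $f(y,u) \in R_j(\lambda_{i \to j})$, witnessing $y \in \Pre(R_j(\lambda_{i \to j}))$ via the input $u$. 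As $y$ was arbitrary, $A_j(x) \subseteq \Pre(R_j(\lambda_{i \to j}))$.

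I expect the only genuinely delicate point to be this final step: the radius is defined through an \emph{open} ball, whereas the chain of inequalities only yields the non-strict bound $\lVert x' - f(y,u)\rVert_\infty \leq r_j(x',\lambda_{i \to j})$, so a point realising equality sits on the boundary of the open ball. This boundary case is resolved precisely by the closedness of $R_j(\lambda_{i \to j})$, which guarantees that the closed ball of radius $r_j(x',\lambda_{i \to j})$ is still contained in $R_j(\lambda_{i \to j})$. A secondary point worth stating explicitly is the monotonicity of the $\infty$-norm under the componentwise inequality, which relies on the non-negativity of $J^+(R_i)$; this is what lets me pass cleanly from the vector bound of \theoremref{thm:Jacobian} to the scalar radius comparison.
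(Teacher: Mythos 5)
Your proof is correct and follows essentially the same route as the paper's: apply \theoremref{thm:Jacobian} with the sampled input $u$, pass to the $\infty$-norm to bound $\lVert x' - f(y,u)\rVert_\infty$ by $r_j(x',\lambda_{i\to j})$, and conclude membership in $R_j(\lambda_{i\to j})$ via the ball. Your explicit handling of the open-ball/non-strict-inequality boundary case via closedness of the polytope is a small point the paper's proof glosses over, and is a welcome addition rather than a deviation.
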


By using \theoremref{thm:underapproximation} for multiple samples, we obtain $\cup_x A_j(x) \subseteq \Pre(R_j(\lambda_{i\rightarrow j}))$.
This idea is visualised in \figureref{fig:BRS}, showing $R_j(\lambda_{i \to j})$ in orange and three samples $(x,u,x')$.
The shaded squares around each $x'$ are the balls $B_{r_j(x',\lambda_{i \to j})}^\infty(x')$, and the squares around each $x$ are the sets $A_j(x)$ that form the underapproximation of the backward reachable set.
Observe that, for the higher value of $\lambda_{i \to j} = 0.95$, we obtain larger sets $A_j(x)$ and thus a larger underapproximation.
However, a higher $\lambda_{i \to j}$ also leads to a larger target set $R_j(\lambda_{i \to j})$, and thus to a more conservative abstraction.

\smallskip
\noindent\textbf{Algorithm with variable $\lambda_{i \to j}$.}
Fix a state $s_i$ and an action $a_j$.
To determine if $a_j$ is enabled in $s_i$, we need to check if the union of all sets $A_j(x)$ covers $\mathcal{T}^{-1}(s_i) = R_i$.
Moreover, the question remains what value of $\lambda_{i \to j}$ we should use in practice.
To this end, we propose an algorithm that chooses $\lambda_{i \to j}$ based on the samples $(x,u,x') \in \mathcal{D}_K$ available.
For brevity, define $\mathcal{D}_K(R_j) \subset \mathcal{D}_K$ as the subset of samples for which $x' \in R_j$, i.e.,
$
\mathcal{D}_K(R_j) = \{ (x,u,x') \in \mathcal{D}_K : x' \in \mathcal{T}^{-1}(s_j) \}.
$
As described in \cref{algo:BRS}, we compute the enabled actions $\Actions(s_i)$ for all $s_i \in \States$ as follows:
\begin{enumerate}
    \item As shown in \figureref{fig:BRS} (right), we create a uniform tiling of $R_i$ into hyperrectangles that we call \emph{voxels}.
    The set of all $m_i \in \NN$ voxels for $R_i$ is $\Phi(R_i) = \{ \phi_1, \ldots, \phi_{m_i} \}$, where each $\phi_\ell \subset \RR^n$ and $\cup_{\ell = 1}^{m_i} \phi_\ell = R_i$. Let $c_\phi, \delta_\phi \in \RR^n$ be the centre and radius of voxel $\phi \in \Phi(R_i)$, respectively. 
    \item Fix a voxel $\phi \in \Phi(R_i)$ and a sample $(x,u,x') \in \mathcal{D}_K(R_j)$. With overloading of notation, let $\lambda^\star(\phi, (x,u,x'))$ be the smallest value of $\lambda_{i \to j}$ such that $\phi$ is completely contained in $A_j(x)$, i.e., $\phi \subseteq A_j(x)$. In practice, we find an overapproximation of $\lambda^\star(\phi, (x,u,x'))$ defined as
    \begin{equation}
        \lambda^+(\phi, (x,u,x')) \coloneqq 1 +
        \frac{\big\lVert J^+(R_i) \cdot (| x - c_\phi| + \delta_\phi)\big\rVert_\infty - r_j(x', 1)}{r_j(x', 2) - r_j(x', 1)}  \geq \lambda^\star(\phi, (x,u,x')).
        \label{eq:voxel_distance}
    \end{equation}
    \item Then, we compute the actual value of $\lambda_{i \to j}$ as the maximum of $\lambda^+(\phi, (x,u,x'))$ over all $\phi \in \Phi(R_i)$ and the minimum over all $(x,u,x') \in \mathcal{D}_K(R_j)$:
    \begin{equation}
        \lambda_{i \to j} = \max_{\phi \in \Phi(R_i)} \, \min_{(x,u,x') \in \mathcal{D}_K(R_j)} \, \lambda^+(\phi, (x,u,x')).
        \label{eq:compute_lambda}
    \end{equation}
    As shown in \figureref{fig:BRS} (right), we thus find a factor $\lambda_{i \to j}$ such that \emph{every voxel $\phi$ is covered by a ball around some sampled state $x'$}.
    Since $\cup_{\phi \in \Phi(R_i)} = R_i$, it follows that $R_i \subseteq \Pre(R_j(\lambda_{i\rightarrow j}))$.
    \item We check whether $\lambda_{i \to j}$ satisfies the global upper bound $\Lambda$. If $\lambda_{i \to j} \leq \Lambda$, then we conclude that $a_j \in \Actions(s_i)$ for $\lambda_{i \to j}$.
    If, on the other hand, $\lambda_{i \to j} > \Lambda$, then we set $a_j \notin \Actions(s_i)$.
\end{enumerate}

\begin{remark}[Role of the maximum scaling factor $\Lambda$]
    The hyperparameter $\Lambda$ controls the number of actions enabled in the IMDP.
    A higher $\Lambda$ results in more enabled IMDP actions, resulting in a more accurate but larger abstraction.
    In the experiments, we explore the effect of $\Lambda$ in practice.
\end{remark}

\noindent\textbf{Policy refinement.}
Our approach leads directly to a strategy for obtaining the refined policy $\policy_k(x)$ in \cref{eq:refined_controller}.
Suppose that at time $k$, the continuous state $x_k \in R_i$ corresponds with IMDP state $s_i = \mathcal{T}(x_k)$, and suppose the optimal IMDP action is $a_j = \scheduler(s_i)$.
Let $\phi \in \Phi(R_i)$ be the voxel containing $x_k$.
Then, we choose $\policy_k(x_k)$ as the input $u \in \mathcal{U}$ that attains the minimal $\lambda^+(\phi,(x,u,x'))$ over all samples $(x,u,x') \in \mathcal{D}_K(R_j)$ in \cref{eq:compute_lambda}.
As required, this leads to $f(x_k,u) \in R_j(\lambda_{i \to j})$ by construction.
Thus, we obtain a refined policy that is constant within each voxel.

\begin{algorithm}[t]
{\small
\DontPrintSemicolon
\caption{Computing enabled actions by underapproximating backward reachable sets}
\label{algo:BRS}
\KwData{Samples $\mathcal{D}_K = \{(x_\ell, u_\ell, \hat x_{\ell}=f(x_\ell, u_\ell))\}_{\ell=1}^{K}$; max. scaling factor $\Lambda > 0$}
\KwResult{Enabled actions $\Actions(s_i) \subseteq \Actions$ for all IMDP states $s_i \in \States$}
\hrule
\,$\Actions(s_i) \gets \emptyset \,\, \forall s_i \in \States$\Comment*[r]{Initialise enabled actions}
\,\For{$j = 1,\ldots,v$}{
    \,$\mathcal{D}_K(R_j) \gets \{ (x,u,x') \in \mathcal{D}_K : x' \in \mathcal{T}^{-1}(s_j) \}$\Comment*[r]{Find samples leading to successor in $s_j$}
    \,\For{$i = 1,\ldots,v$}{
        \,$\Phi(R_i) \gets \{\phi_1,\ldots,\phi_{m_i}\}$\Comment*[r]{Define voxelised representation of $R_i$}
        \,\For{$\phi \in \Phi(R_i)$}{
            \,\For{$(x,u,x') \in \mathcal{D}_K(R_j)$}{
                \,\If{$x \in R_i$}{
                    \, $\lambda^+(\phi, (x,u,x')) \gets  1+ \frac{\big\lVert J^+(R_i) \cdot (| x - c_\phi| + \delta_\phi)\big\rVert_\infty - r_j(x', 1)}{r_j(x', 2) - r_j(x', 1)} $
                    }
                }
            }
            \,$\lambda_{i \to j} \gets \max_{\phi \in   \Phi(R_i)} \, \min_{(x,u,x') \in \mathcal{D}_K(R_j)} \, \lambda^+(\phi, (x,u,x'))$\Comment*[r]{Compute scaling factor}
            \If{$\lambda_{i \to j} \leq \Lambda$}
            {
                \,$\Actions(s_i) \gets \Actions(s_i) \cup \{ a_j \}$\Comment*[r]{Enable action if $\lambda_{i \to j}$ is below max. scaling factor $\Lambda$}
            }
        }
    }
}
\end{algorithm}

\section{Computing Probability Intervals with Data} 
\label{sec:probability_intervals}

We compute bounds on the probability intervals in \cref{eq:transition_function} by sampling the noise $w \in \Omega$.
We focus on a fixed transition $(s_i, a_j, s')$ and repeat the procedure for all state-action pairs.
First, observe that\footnote{We write $R_j(\lambda_{i\to j}) + w = \{\alpha + w : \alpha \in R_j(\lambda_{i\to j})\}$ for the Minkowski sum between $R_j(\lambda_{i\to j})$ and $w$.}
\begin{align}
    \label{eq:probability_lb}
    \check{P}_{i,j}(s') &\coloneqq \Prob\left\{ w \in \Omega : R_j(\lambda_{i\to j}) + w \subseteq \mathcal{T}^{-1}(s') \right\}
    \leq 
    \min_{\hat{x} \in R_j(\lambda_{i\to j})} \!\! \eta \big( \hat{x}, \mathcal{T}^{-1}(s') \big),
    \\
    \label{eq:probability_ub}
    \hat{P}_{i,j}(s') &\coloneqq \Prob\left\{ w \in \Omega : R_j(\lambda_{i\to j}) + w \cap \mathcal{T}^{-1}(s') \neq \emptyset \right\}
    \geq
    \max_{\hat{x} \in R_j(\lambda_{i\to j})} \!\! \eta \big( \hat{x}, \mathcal{T}^{-1}(s') \big),
\end{align}
where the probabilities $\check{P}_{i,j}(s')$, $\hat{P}_{i,j}(s')$ are the (unknown) success probabilities of a Bernoulli random variable.
Thus, fixing a set $\{w^{(1)},\ldots,w^{(N)}\} \in \Omega^N$ of $N \in \NN$ noise samples induces the binomial distributions $\Bin(N,\check{P}_{i,j}(s'))$ and $\Bin(N,\hat{P}_{i,j}(s'))$.\footnote{We write $\Bin(n,p)$ to denote a binomial distribution with $n \in \NN$ experiments and success probability $p \in [0,1]$.}
Observing $\{w^{(1)},\ldots,w^{(N)}\} \in \Omega^N$ yields samples $\check{N}_{i,j}(s') \sim \Bin(N,\check{P}_{i,j}(s'))$ and $\hat{N}_{i,j}(s') \sim \Bin(N,\hat{P}_{i,j}(s'))$ from these binomials.
\ifappendix
    (see~\cref{app:probability_intervals} for an explicit definition).
\else
    (see~\citet[App.~D]{L4DC_with_appendix} for an explicit definition).
\fi
\cite{DBLP:conf/aaai/BadingsRA023,DBLP:conf/eucc/BadingsRA024} leverage the scenario approach~\citep{DBLP:journals/arc/CampiCG21} to estimate $\check{P}_{i,j}(s')$ and $\hat{P}_{i,j}(s')$ based on $\check{N}_{i,j}(s')$ and $\hat{N}_{i,j}(s')$ as intervals.
However, as recently pointed out by~\cite[Theorem~2]{DBLP:journals/corr/abs-2404-05424}, tighter intervals can be obtained by using the \emph{Clopper-Pearson interval}, a well-known statistical method for calculating binomial confidence intervals~\citep{clopper1934use,newcombe1998two}:

\begin{theorem}[Clopper-Pearson interval]
    \label{thm:PAC_interval}
    Let $\{w^{(1)},\ldots,w^{(N)}\} \in \Omega^N$, and let  $\beta \in (0,1)$. 
    For fixed $s_i, s' \in \States$ and $a_j \in \Actions(s_i)$, compute $\check{N}_{i,j}(s')$ and $\hat{N}_{i,j}(s')$.
    Then, it holds that
    \begin{equation}
        \label{eq:thm_interval}
        \Prob^N \Big\{ 
        \{w^{(1)},\ldots,w^{(N)}\} \in \Omega^N
        :
        \check{P}_\text{lb} \leq \transfuncImdp(s_i, a_j)(s') \leq \hat{P}_\text{ub}
        \Big\} \geq 1-\beta,
    \end{equation}
    where $\check{P}_\text{lb} = 0$ if $\check{N}_{i,j}(s') = 0$, and otherwise, $\check{P}_\text{lb}$ is the solution to
    \begin{equation}
        \label{eq:thm_lower_bound}
        \frac{\beta}{2} = \sum\nolimits_{i=\check{N}_{i,j}(s')}^{N} \binom Ni \cdot (\check{P}_\text{lb})^{i} \cdot (1-\check{P}_\text{lb})^{N-i},
    \end{equation}
    and $\hat{P}_\text{ub} = 1$ if $\check{N}_{i,j}(s') = N$, and otherwise, $\hat{P}_\text{ub}$ is the solution to
    \begin{equation}
        \label{eq:thm_upper_bound}
        \frac{\beta}{2} = \sum\nolimits_{i=0}^{\hat{N}_{i,j}(s')} \binom Ni \cdot (\hat{P}_\text{ub})^i \cdot (1-\hat{P}_\text{ub})^{N-i}.
    \end{equation}
\end{theorem}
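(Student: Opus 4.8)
The plan is to split the claimed event into two one-sided confidence statements, one for each of the two Bernoulli parameters $\check{P}_{i,j}(s')$ and $\hat{P}_{i,j}(s')$, and then to chain these with the \emph{deterministic} sandwich inequalities of \cref{eq:probability_lb,eq:probability_ub}. Concretely, I would first observe that for \emph{any} admissible distribution $\mu \in \transfuncImdp(s_i,a_j)$, the value $\mu(s')$ lies in the true probability interval $[\plow(s_i,a_j,s'),\pupp(s_i,a_j,s')]$, whose endpoints are the $\min/\max$ of $\eta(\hat{x},\mathcal{T}^{-1}(s'))$ over $\hat{x}\in R_j(\lambda_{i\to j})$. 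By \cref{eq:probability_lb,eq:probability_ub} these endpoints are themselves bounded by the two Bernoulli success probabilities, which holds surely (the randomness enters only in the estimation step that follows), giving the chain
\[
\check{P}_{i,j}(s') \;\leq\; \plow(s_i,a_j,s') \;\leq\; \mu(s') \;\leq\; \pupp(s_i,a_j,s') \;\leq\; \hat{P}_{i,j}(s').
\]
Hence it suffices to produce, with the stated confidence, a valid lower estimate of $\check{P}_{i,j}(s')$ and a valid upper estimate of $\hat{P}_{i,j}(s')$.

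For each Bernoulli parameter separately I would invoke the classical Clopper-Pearson construction. Viewing $\check{N}_{i,j}(s')\sim\Bin(N,\check{P}_{i,j}(s'))$, the lower bound $\check{P}_\text{lb}$ in \cref{eq:thm_lower_bound} is precisely the parameter value at which the upper binomial tail $p\mapsto\sum_{i=\check{N}_{i,j}(s')}^{N}\binom{N}{i}p^{i}(1-p)^{N-i}$ reaches $\beta/2$. The key step is to verify that this is a valid one-sided confidence bound, namely $\Prob^N\{\check{P}_\text{lb}\leq\check{P}_{i,j}(s')\}\geq 1-\beta/2$, which I would establish by the standard inversion-of-the-binomial-test argument: the upper tail is nondecreasing in the parameter $p$, and $\check{P}_\text{lb}$ is nondecreasing in the observed count, so the event $\{\check{P}_\text{lb}>\check{P}_{i,j}(s')\}$ coincides with the observed count exceeding a deterministic threshold, whose probability is in turn bounded by $\beta/2$ via the same monotonicity. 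The symmetric argument applied to the lower tail \cref{eq:thm_upper_bound} yields $\Prob^N\{\hat{P}_{i,j}(s')\leq\hat{P}_\text{ub}\}\geq 1-\beta/2$. The degenerate counts ($0$ and $N$) are handled by the explicit assignments $\check{P}_\text{lb}=0$ and $\hat{P}_\text{ub}=1$, which satisfy the respective bounds trivially.

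Finally I would combine the two one-sided statements. Because $\check{N}_{i,j}(s')$ and $\hat{N}_{i,j}(s')$ are computed from the \emph{same} sample $\{w^{(1)},\ldots,w^{(N)}\}$, they need not be independent, so rather than multiplying probabilities I would apply a union (Bonferroni) bound: the probability that either one-sided bound fails is at most $\beta/2+\beta/2=\beta$, hence both hold simultaneously with probability at least $1-\beta$. On that event the deterministic chain above extends to $\check{P}_\text{lb}\leq\check{P}_{i,j}(s')\leq\mu(s')\leq\hat{P}_{i,j}(s')\leq\hat{P}_\text{ub}$ for every admissible $\mu$, which is precisely the conclusion \cref{eq:thm_interval}.

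I expect the main obstacle to be the one-sided validity argument for the Clopper-Pearson bounds: one must establish monotonicity of the binomial tail in both the parameter and the observed count and carefully treat the fact that the bound is itself a random function of the observed count. This is where the confidence level $1-\beta/2$ genuinely enters, as opposed to being a routine manipulation. By contrast, allocating the budget as $\beta/2+\beta/2$, invoking the already-established inequalities \cref{eq:probability_lb,eq:probability_ub}, and checking the boundary counts are comparatively mechanical.
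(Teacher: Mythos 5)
Your proposal is correct and follows essentially the same route as the paper: apply the Clopper--Pearson construction to each of the two binomials to obtain one-sided bounds at level $1-\nicefrac{\beta}{2}$, combine them by a union bound, and chain with the deterministic inequalities \cref{eq:probability_lb,eq:probability_ub}. The only difference is that you unfold the test-inversion/monotonicity argument behind the one-sided Clopper--Pearson validity, which the paper simply cites.
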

\begin{proof}
    The proof follows by applying the Clopper-Pearson interval~\citep{clopper1934use,Thulin_2014} to the binomials $\check{N}_{i,j}(s') \sim \Bin(N,\check{P}_{i,j}(s'))$ and $\hat{N}_{i,j}(s') \sim \Bin(N,\hat{P}_{i,j}(s'))$,~which~yields
    \begin{equation}
        \label{eq:thm_interval:proof1}
        \Prob^N \big\{ \check{P}_\text{lb} \leq \check{P}_{i,j}(s') \big\} \geq 1-\nicefrac{\beta}{2},
        \quad \text{and} \quad
        \Prob^N \big\{ \hat{P}_{i,j}(s') \leq \hat{P}_\text{ub} \big\} \geq 1-\nicefrac{\beta}{2}.
    \end{equation}
    Combining \cref{eq:thm_interval:proof1} with \cref{eq:probability_lb,eq:probability_ub} through the union bound, we obtain \cref{eq:thm_interval}. 
\end{proof}
\theoremref{thm:PAC_interval} asserts that each interval is \emph{correct with probability $\geq 1-\beta$}.
Combining \theoremref{thm:existence_controller_main,thm:PAC_interval} leads to a statistical solution to \cref{prob:Problem1}; the proof is analogous to~\citet[Thm.~2]{DBLP:journals/jair/BadingsRAPPSJ23}:
\begin{corollary}
    \label{corollary:solution}
    Let $\imdp'$ be the IMDP abstraction with probability intervals obtained via \theoremref{thm:PAC_interval}.
    For every IMDP scheduler $\scheduler \in \schedulerSpace^{\imdp'}$, there exists a policy $\policy \in \policySpace^\system$ for system $\system$ such that
    \begin{equation*}
        \label{eq:overall_bound}
        \Prob\Big\{
        \min_{P \in \transfuncImdp} \satprob^{\imdp'}_{\scheduler,P}(\sGoal, \sUnsafe,\horizon)
        \leq
        \satprob^\system_{\policy}(\xGoal, \xUnsafe,\horizon)
        \Big\}
        \geq \max(0, 1-\beta \cdot |\States|^2 \cdot |\Actions|).
    \end{equation*}
\end{corollary}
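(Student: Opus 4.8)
The plan is to combine the per-transition guarantee of \cref{thm:PAC_interval} with the abstraction-correctness result of \cref{thm:existence_controller_main} via a union bound, while exploiting that the refined policy does not depend on the probability intervals. First I would fix an arbitrary scheduler $\scheduler \in \schedulerSpace^{\imdp'}$ and let $\policy \in \policySpace^\system$ be its refinement via \cref{eq:refined_controller}. The crucial observation is that this refinement depends only on $\scheduler$ and the target sets $R_j(\lambda_{i\to j})$, both of which are determined by the dynamics samples $\mathcal{D}_K$ and \emph{not} by the noise samples $\{w^{(1)},\ldots,w^{(N)}\}$ used to build the intervals. Hence $\policy$ is fixed once $\scheduler$ is fixed, the only randomness in the claimed bound is that of the $N$ noise samples, and the quantified statement ``there exists $\policy$ such that $\Prob\{\cdots\}\geq\cdots$'' is well posed.

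Next, for each transition $(s_i,a_j,s')$ with $a_j\in\Actions(s_i)$, let $E_{i,j,s'}$ be the event that the Clopper--Pearson interval $[\check{P}_\text{lb},\hat{P}_\text{ub}]$ brackets the exact bounds, that is, $\check{P}_\text{lb}\leq \min_{\hat{x}}\eta(\hat{x},\mathcal{T}^{-1}(s'))$ and $\max_{\hat{x}}\eta(\hat{x},\mathcal{T}^{-1}(s'))\leq\hat{P}_\text{ub}$. By \cref{eq:probability_lb,eq:probability_ub} together with \cref{thm:PAC_interval}, each $E_{i,j,s'}$ has probability at least $1-\beta$. Since there are at most $|\States|$ source states, $|\Actions|$ actions, and $|\States|$ target states, there are at most $|\States|^2\cdot|\Actions|$ such events, and the union bound gives
\[
\Prob^N\Big(\bigcap_{i,j,s'} E_{i,j,s'}\Big)\;\geq\; 1-\beta\cdot|\States|^2\cdot|\Actions|.
\]

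On this intersection event, every interval of $\imdp'$ contains the corresponding exact interval of the ideal abstraction $\imdp$ of \cref{eq:transition_function}; since $\imdp$ and $\imdp'$ share the same states, actions, and target sets, this implies $\transfuncImdp(s_i,a_j)\subseteq\transfuncImdp'(s_i,a_j)$ for every enabled state--action pair. Minimising the reach-avoid value over the larger adversary set can only decrease it, so for the fixed scheduler $\scheduler$ we have $\min_{P\in\transfuncImdp'}\satprob^{\imdp'}_{\scheduler,P}(\sGoal,\sUnsafe,\horizon)\leq\min_{P\in\transfuncImdp}\satprob^{\imdp}_{\scheduler,P}(\sGoal,\sUnsafe,\horizon)$. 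Chaining this with \cref{thm:existence_controller_main} applied to $\imdp$ and $\policy$ yields $\min_{P\in\transfuncImdp'}\satprob^{\imdp'}_{\scheduler,P}(\sGoal,\sUnsafe,\horizon)\leq\satprob^\system_\policy(\xGoal,\xUnsafe,\horizon)$ on the intersection event. Since this event has probability at least $1-\beta\cdot|\States|^2\cdot|\Actions|$, and since the bound is trivially valid when the stated right-hand probability would be negative, we may take $\max(0,\cdot)$ and obtain the claim.

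The main obstacle I anticipate is the monotonicity step: one must argue rigorously that widening each interval enlarges the adversary's feasible set and can therefore only lower the optimal robust value, uniformly over the fixed scheduler. A secondary point to handle carefully is the ordering of quantifiers and the independence of $\policy$ from the noise samples, which is what makes the probabilistic ``there-exists-a-policy'' statement well defined and free of measurability concerns.
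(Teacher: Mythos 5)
Your proposal is correct and follows essentially the same route as the paper's (sketched) argument: a union bound over the at most $|\States|^2 \cdot |\Actions|$ transitions, each of whose Clopper--Pearson intervals is valid with probability at least $1-\beta$ by \theoremref{thm:PAC_interval} combined with \cref{eq:probability_lb,eq:probability_ub}, followed by an application of \theoremref{thm:existence_controller_main} on the event that all intervals are correct. Your explicit treatment of the monotonicity of the robust value under interval widening and of the independence of the refined policy from the noise samples makes precise details the paper leaves implicit by deferring to \citet[Thm.~2]{DBLP:journals/jair/BadingsRAPPSJ23}.
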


The factor of $|\States|^2 \cdot |\Actions|$ in \cref{eq:overall_bound} comes from the maximum possible number of IMDP transitions, which is the number of $(s,a,s')$ triples such that $s \in \States$, $a \in \Actions(s)$, and $\transfuncImdp(s,a)(s') > 0$.

\corollaryref{corollary:solution} carries an important message: For any system $\system$ (that satisfies \cref{assumptions}), the IMDP abstraction $\imdp'$ with probability intervals given by \theoremref{thm:PAC_interval} leads, with at least probability $1-\beta \cdot |\States|^2 \cdot |\Actions|$, to a solution to \cref{prob:Problem1}.
In practice, we can again use \cref{eq:refined_controller} to refine any IMDP scheduler $\scheduler \in \schedulerSpace^{\imdp'}$ into the corresponding policy $\policy \in \policySpace^\system$ for system $\system$ that solves \cref{prob:Problem1}.

\section{Experimental Evaluation}
\label{sec:experiments}

We conduct experiments on (1) an inverted pendulum, (2) a harmonic oscillator with nonlinear damping, and (3) a car parking benchmark with nonlinear control.
Details on each benchmark are
\ifappendix
    in~\cref{app:Experiments}.
\else
    in~\citet[App.~E]{L4DC_with_appendix}.
\fi
We implement our approach in Python, using robust value iteration~\citep{DBLP:conf/cdc/WolffTM12,DBLP:journals/mor/Iyengar05}, implemented in the model checker PRISM~\citep{DBLP:conf/cav/KwiatkowskaNP11} to compute optimal schedulers as per \cref{eq:optimal_policy}.
All experiments ran parallelized on a computer with 32 3.3 GHz cores and 128 GB  of RAM. 
For all experiments, we use an error rate of $\beta=1-\frac{0.05}{T}$ on every IMDP transition, leading to an overall confidence probability (as per \corollaryref{corollary:solution}) of $0.95$ for an IMDP abstraction with $T$ transitions.

\begin{figure}[t!]
\floatconts
{fig:results_car}%
{\vspace{-1.5em}\caption{Reach-avoid probabilities $\satprob^\system_{\policy}(\xGoal, \xUnsafe,\horizon)$ for the car benchmark with (a) probability intervals from \theoremref{thm:PAC_interval} and (b) the approach from \cite{DBLP:conf/aaai/BadingsA00PS22}, both with $N = 10\,000$ samples. Fig.~(c) shows simulated trajectories under the resulting policy from \cref{eq:refined_controller} for our method.}}%
{%
\subfigure[Clopper-Pearson intervals]{%
\label{fig:results_car:CP}%
\includegraphics[height=4.2cm]{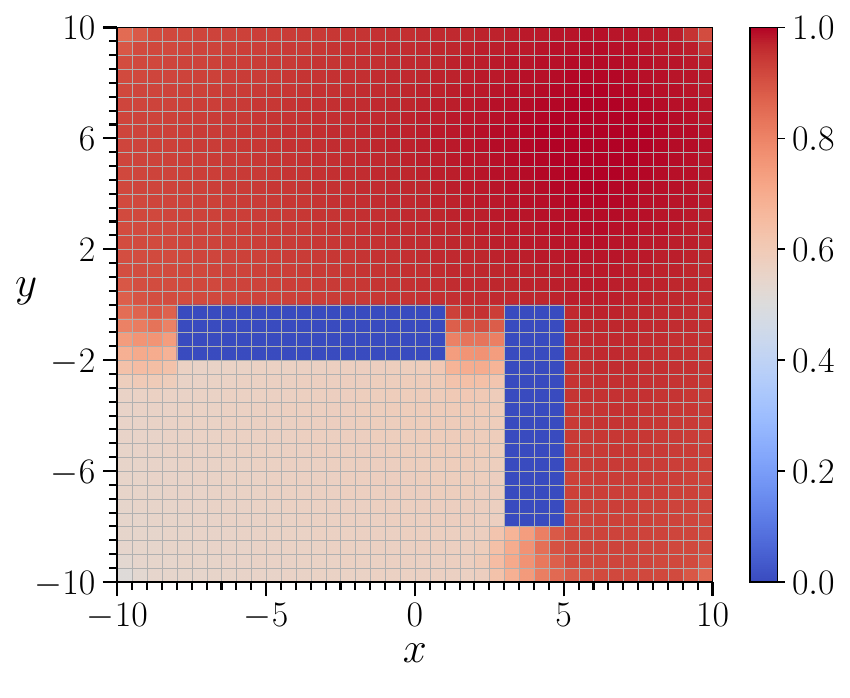}
}\hfill
\subfigure[Scenario approach intervals]{%
\label{fig:results_car:scenario}%
\includegraphics[height=4.2cm]{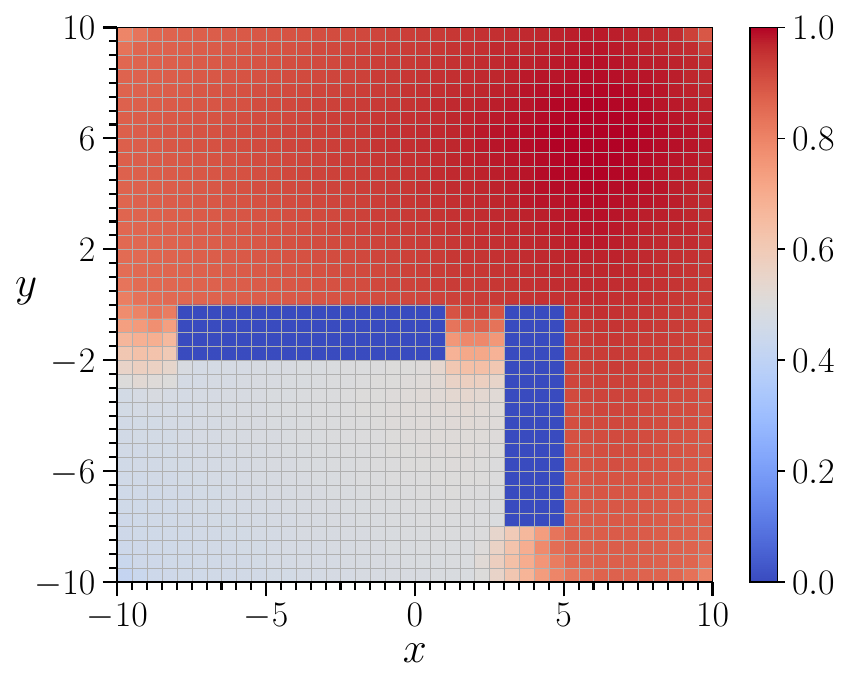}
}\hfill
\subfigure[Simulation under (a)]{%
\label{fig:results_car:traj}%
\includegraphics[height=4.2cm]{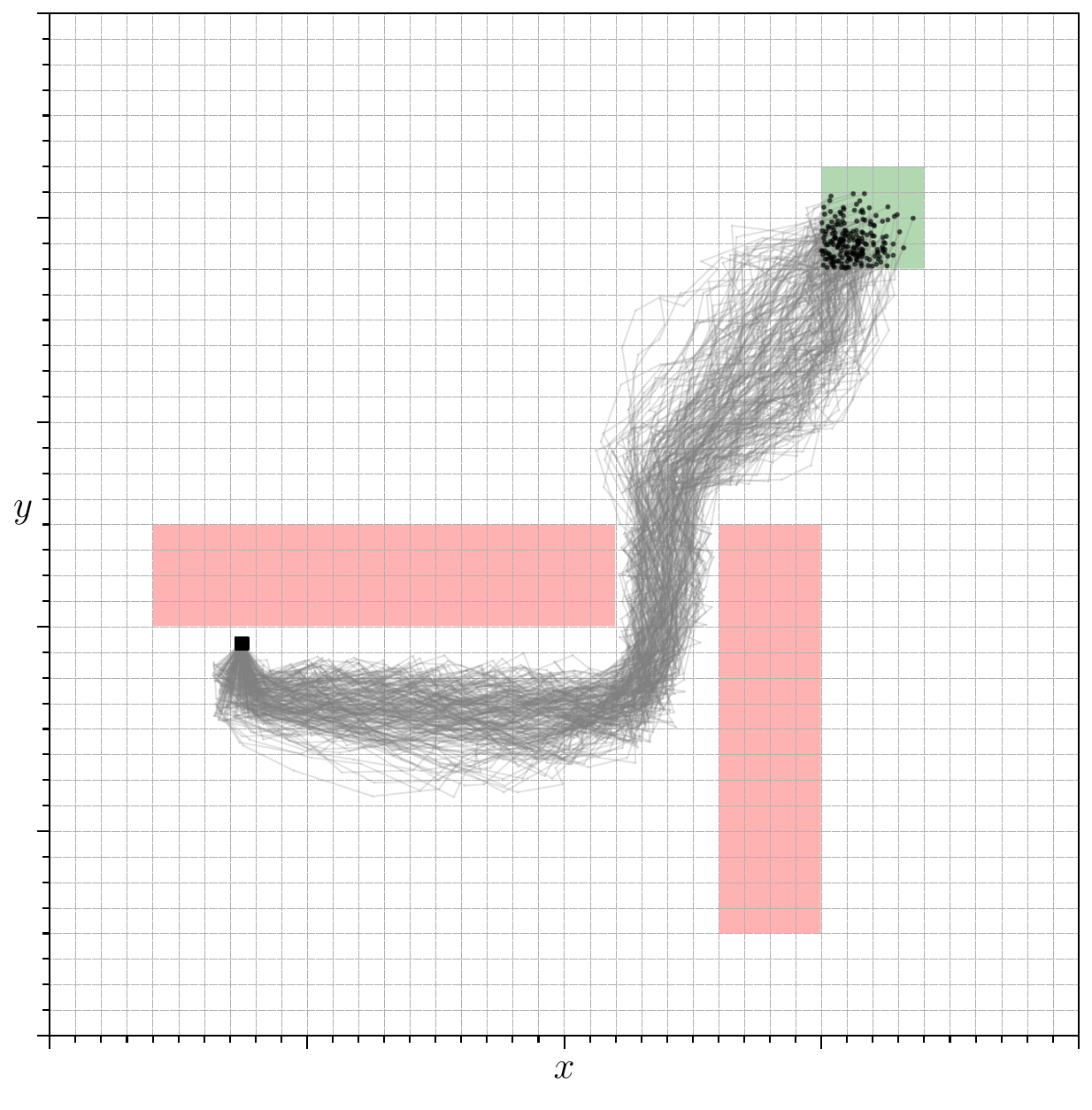}
\vspace{0.5cm}}
}
\end{figure}

\smallskip
\noindent\textbf{Lower bounds on reach-avoid probabilities.\,\,}
We investigate whether our IMDP abstractions lead to sound and non-trivial lower bounds on the reach-avoid probability $\satprob^\system_{\policy}(\xGoal, \xUnsafe,\horizon)$.
A heatmap of these probabilities for the car parking benchmark (probability intervals obtained using \theoremref{thm:PAC_interval} with $N = 10\,000$ samples) is shown in~\figureref{fig:results_car:CP} (results for the other benchmarks and/or a lower number of samples of $N = 1\,000$ are
\ifappendix
    in~\cref{app:Experiments}).
\else
    in~\citet[App.~E]{L4DC_with_appendix}).
\fi
For this case,~\figureref{fig:results_car:traj} shows a simulated trajectory under the resulting policy obtained from~\cref{eq:optimal_policy}.
These results confirm that our method yields reliable policies with non-trivial reach-avoid guarantees in practice.

\smallskip
\noindent\textbf{Comparison to scenario approach.\,\,}
We benchmark our IMDPs with probability intervals from~\theoremref{thm:PAC_interval} against the approach from~\cite{DBLP:conf/aaai/BadingsRA023}, which instead uses the scenario approach.
The resulting reach-avoid probabilities for the car benchmark are in~\figureref{fig:results_car:scenario} (again, 
\ifappendix
    see~\cref{app:Experiments}
\else
    see~\citet[App.~E]{L4DC_with_appendix}
\fi
for the other benchmarks).
Using Clopper-Pearson leads to tighter intervals than those from~\cite{DBLP:conf/aaai/BadingsRA023}, thus leading to policies with better reach-avoid guarantees.

\smallskip
\noindent\textbf{The role of the scaling factors $\lambda_{i \to j}$.\,\,}
Finally, we demonstrate the importance of choosing the scaling factors $\lambda_{i \to j}$ defining the IMDP actions.
To this end, we run all three benchmarks with a smaller $\Lambda$ that upper bounds $\lambda_{i \to j}$, as defined in \cref{sec:underapproximation} (we report the precise values of $\Lambda$
\ifappendix
    in~\cref{app:Experiments}).
\else
    in~\citet[App.~E]{L4DC_with_appendix}).
\fi
Our results, which we present 
\ifappendix
    in~\cref{app:Experiments},
\else
    in~\citet[App.~E]{L4DC_with_appendix},
\fi
demonstrate that a lower $\Lambda$ generally leads to decreased reach-avoid probabilities.
This is likely because a lower $\Lambda$ leads to smaller backward reachable sets, which causes the system taking more steps until it reaches the goal -- investigating this effect in more detail is an important aspect for future research.

\section{Conclusion}
\label{sec:conclusion}
We presented a data-driven approach for the automated synthesis of policies for nonlinear systems with additive stochastic noise. 
Our method only requires samples of the system and the Lipschitz constant, thus overcoming the limitations of model-based abstractions when the dynamics are not fully known.
Our numerical experiments show our approach yields robust and reliable policies. 

This work opens pathways to enhance data-driven methods for controller synthesis in real-world settings where traditional modelling is infeasible.
The main limitations of this work are the restriction to additive and i.i.d. stochastic noise, and the sensitivity of the framework to hyperparameters.
Given the high amount of data required to construct the IMDP, this work only scales to systems with a small number of state variables.
Future directions include investigating tighter bounds for PAC guarantees and integrating our techniques with other frameworks to reduce the computational complexity.
In particular, exploiting structural properties of the dynamics and integrating our abstraction technique with a learning framework are two ideas to overcome the scalability limitations.

\acks{This paper was supported by the Horizon Europe EIC project SymAware (101070802), the ERC grant 101089047, and the EPSRC grant EP/Y028872/1, Mathematical Foundations of Intelligence: An “Erlangen Programme” for AI.}

\bibliography{references.bib}

\begin{thebibliography}{65}
\providecommand{\natexlab}[1]{#1}
\providecommand{\url}[1]{\texttt{#1}}
\expandafter\ifx\csname urlstyle\endcsname\relax
  \providecommand{\doi}[1]{doi: #1}\else
  \providecommand{\doi}{doi: \begingroup \urlstyle{rm}\Url}\fi

\bibitem[Abate et~al.(2008)Abate, Prandini, Lygeros, and Sastry]{DBLP:journals/automatica/AbatePLS08}
Alessandro Abate, Maria Prandini, John Lygeros, and Shankar Sastry.
\newblock Probabilistic reachability and safety for controlled discrete time stochastic hybrid systems.
\newblock \emph{Autom.}, 44\penalty0 (11):\penalty0 2724--2734, 2008.

\bibitem[Abate et~al.(2024)Abate, Giacobbe, and Roy]{DBLP:conf/cav/AbateGR24}
Alessandro Abate, Mirco Giacobbe, and Diptarko Roy.
\newblock Stochastic omega-regular verification and control with supermartingales.
\newblock In \emph{{CAV} {(3)}}, volume 14683 of \emph{LNCS}, pages 395--419. Springer, 2024.

\bibitem[Althoff et~al.(2021)Althoff, Frehse, and Girard]{DBLP:journals/arcras/AlthoffFG21}
Matthias Althoff, Goran Frehse, and Antoine Girard.
\newblock Set propagation techniques for reachability analysis.
\newblock \emph{Annu. Rev. Control. Robotics Auton. Syst.}, 4:\penalty0 369--395, 2021.

\bibitem[Alur et~al.(1998)Alur, Henzinger, Kupferman, and Vardi]{DBLP:conf/concur/AlurHKV98}
Rajeev Alur, Thomas~A. Henzinger, Orna Kupferman, and Moshe~Y. Vardi.
\newblock Alternating refinement relations.
\newblock In \emph{{CONCUR}}, volume 1466 of \emph{LNCS}, pages 163--178. Springer, 1998.

\bibitem[Apostol(1982)]{apostol1981}
Tom.~M. Apostol.
\newblock \emph{Mathematical Analysis}.
\newblock Addison-Wesley Publishing Company, 5th edition, 1982.

\bibitem[Badings et~al.(2022)Badings, Abate, Jansen, Parker, Poonawala, and Stoelinga]{DBLP:conf/aaai/BadingsA00PS22}
Thom~S. Badings, Alessandro Abate, Nils Jansen, David Parker, Hasan~A. Poonawala, and Mari{\"{e}}lle Stoelinga.
\newblock Sampling-based robust control of autonomous systems with non-gaussian noise.
\newblock In \emph{{AAAI}}, pages 9669--9678. {AAAI} Press, 2022.

\bibitem[Badings et~al.(2023{\natexlab{a}})Badings, Romao, Abate, and Jansen]{DBLP:conf/aaai/BadingsRA023}
Thom~S. Badings, Licio Romao, Alessandro Abate, and Nils Jansen.
\newblock Probabilities are not enough: Formal controller synthesis for stochastic dynamical models with epistemic uncertainty.
\newblock In \emph{{AAAI}}, pages 14701--14710. {AAAI} Press, 2023{\natexlab{a}}.

\bibitem[Badings et~al.(2023{\natexlab{b}})Badings, Romao, Abate, Parker, Poonawala, Stoelinga, and Jansen]{DBLP:journals/jair/BadingsRAPPSJ23}
Thom~S. Badings, Licio Romao, Alessandro Abate, David Parker, Hasan~A. Poonawala, Mari{\"{e}}lle Stoelinga, and Nils Jansen.
\newblock Robust control for dynamical systems with non-gaussian noise via formal abstractions.
\newblock \emph{J. Artif. Intell. Res.}, 76:\penalty0 341--391, 2023{\natexlab{b}}.

\bibitem[Badings et~al.(2024)Badings, Romao, Abate, and Jansen]{DBLP:conf/eucc/BadingsRA024}
Thom~S. Badings, Licio Romao, Alessandro Abate, and Nils Jansen.
\newblock A stability-based abstraction framework for reach-avoid control of stochastic dynamical systems with unknown noise distributions.
\newblock In \emph{{ECC}}, pages 564--570. {IEEE}, 2024.

\bibitem[Baier and Katoen(2008)]{BaierKatoen08}
Christel Baier and Joost{-}Pieter Katoen.
\newblock \emph{Principles of model checking}.
\newblock {MIT} Press, 2008.

\bibitem[Bansal et~al.(2017)Bansal, Chen, Herbert, and Tomlin]{DBLP:conf/cdc/BansalCHT17}
Somil Bansal, Mo~Chen, Sylvia~L. Herbert, and Claire~J. Tomlin.
\newblock Hamilton-jacobi reachability: {A} brief overview and recent advances.
\newblock In \emph{{CDC}}, pages 2242--2253. {IEEE}, 2017.

\bibitem[Banse et~al.(2023)Banse, Romao, Abate, and Jungers]{DBLP:journals/corr/abs-2303-17618}
Adrien Banse, Licio Romao, Alessandro Abate, and Rapha{\"{e}}l~M. Jungers.
\newblock Data-driven abstractions via adaptive refinements and a kantorovich metric.
\newblock \emph{CoRR}, abs/2303.17618, 2023.

\bibitem[Belta et~al.(2017)Belta, Yordanov, and Gol]{belta2017formal}
Calin Belta, Boyan Yordanov, and Ebru~Aydin Gol.
\newblock \emph{Formal methods for discrete-time dynamical systems}, volume~15.
\newblock Springer, 2017.

\bibitem[Bertsekas and Shreve(1978)]{Bertsekas.Shreve78}
Dimitri~P. Bertsekas and Steven~E. Shreve.
\newblock \emph{Stochastic Optimal Control: The Discrete-time Case}.
\newblock Athena Scientific, 1978.
\newblock ISBN 1-886529-03-5.

\bibitem[Calbert et~al.(2024)Calbert, Girard, and Jungers]{DBLP:journals/corr/abs-2410-06083}
Julien Calbert, Antoine Girard, and Rapha{\"{e}}l~M. Jungers.
\newblock Classification of simulation relations for symbolic control.
\newblock \emph{CoRR}, abs/2410.06083, 2024.

\bibitem[Campi et~al.(2021)Campi, Car{\`{e}}, and Garatti]{DBLP:journals/arc/CampiCG21}
Marco~C. Campi, Algo Car{\`{e}}, and Simone Garatti.
\newblock The scenario approach: {A} tool at the service of data-driven decision making.
\newblock \emph{Annu. Rev. Control.}, 52:\penalty0 1--17, 2021.

\bibitem[Clopper and Pearson(1934)]{clopper1934use}
Charles~J Clopper and Egon~S Pearson.
\newblock The use of confidence or fiducial limits illustrated in the case of the binomial.
\newblock \emph{Biometrika}, 26\penalty0 (4):\penalty0 404--413, 1934.

\bibitem[Coppola et~al.(2023)Coppola, Peruffo, and Jr.]{DBLP:journals/csysl/CoppolaPM23}
Rudi Coppola, Andrea Peruffo, and Manuel~Mazo Jr.
\newblock Data-driven abstractions for verification of linear systems.
\newblock \emph{{IEEE} Control. Syst. Lett.}, 7:\penalty0 2737--2742, 2023.

\bibitem[Delimpaltadakis et~al.(2023)Delimpaltadakis, Lahijanian, Jr., and Laurenti]{DBLP:conf/hybrid/Delimpaltadakis23}
Giannis Delimpaltadakis, Morteza Lahijanian, Manuel~Mazo Jr., and Luca Laurenti.
\newblock Interval markov decision processes with continuous action-spaces.
\newblock In \emph{{HSCC}}, pages 12:1--12:10. {ACM}, 2023.

\bibitem[Devonport et~al.(2021)Devonport, Saoud, and Arcak]{DBLP:conf/cdc/DevonportSA21}
Alex Devonport, Adnane Saoud, and Murat Arcak.
\newblock Symbolic abstractions from data: {A} {PAC} learning approach.
\newblock In \emph{{CDC}}, pages 599--604. {IEEE}, 2021.

\bibitem[Fan et~al.(2018)Fan, Mathur, Mitra, and Viswanathan]{DBLP:conf/cav/FanMM018}
Chuchu Fan, Umang Mathur, Sayan Mitra, and Mahesh Viswanathan.
\newblock Controller synthesis made real: Reach-avoid specifications and linear dynamics.
\newblock In \emph{{CAV} {(1)}}, volume 10981 of \emph{LNCS}, pages 347--366. Springer, 2018.

\bibitem[Givan et~al.(2000)Givan, Leach, and Dean]{DBLP:journals/ai/GivanLD00}
Robert Givan, Sonia~M. Leach, and Thomas~L. Dean.
\newblock Bounded-parameter markov decision processes.
\newblock \emph{Artif. Intell.}, 122\penalty0 (1-2):\penalty0 71--109, 2000.

\bibitem[Gracia et~al.(2024{\natexlab{a}})Gracia, Boskos, Laurenti, and Lahijanian]{DBLP:conf/l4dc/GraciaBLL24}
Ibon Gracia, Dimitris Boskos, Luca Laurenti, and Morteza Lahijanian.
\newblock Data-driven strategy synthesis for stochastic systems with unknown nonlinear disturbances.
\newblock In \emph{{L4DC}}, volume 242 of \emph{Proceedings of Machine Learning Research}, pages 1633--1645. {PMLR}, 2024{\natexlab{a}}.

\bibitem[Gracia et~al.(2024{\natexlab{b}})Gracia, Laurenti, Jr., Abate, and Lahijanian]{DBLP:journals/corr/abs-2412-11343}
Ibon Gracia, Luca Laurenti, Manuel~Mazo Jr., Alessandro Abate, and Morteza Lahijanian.
\newblock Temporal logic control for nonlinear stochastic systems under unknown disturbances.
\newblock \emph{CoRR}, abs/2412.11343, 2024{\natexlab{b}}.

\bibitem[Hashimoto et~al.(2022)Hashimoto, Saoud, Kishida, Ushio, and Dimarogonas]{DBLP:journals/automatica/HashimotoSKUD22}
Kazumune Hashimoto, Adnane Saoud, Masako Kishida, Toshimitsu Ushio, and Dimos~V. Dimarogonas.
\newblock Learning-based symbolic abstractions for nonlinear control systems.
\newblock \emph{Autom.}, 146:\penalty0 110646, 2022.

\bibitem[Hermanns et~al.(2011)Hermanns, Parma, Segala, Wachter, and Zhang]{DBLP:journals/iandc/HermannsPSWZ11}
Holger Hermanns, Augusto Parma, Roberto Segala, Bj{\"{o}}rn Wachter, and Lijun Zhang.
\newblock Probabilistic logical characterization.
\newblock \emph{Inf. Comput.}, 209\penalty0 (2):\penalty0 154--172, 2011.

\bibitem[Iyengar(2005)]{DBLP:journals/mor/Iyengar05}
Garud~N. Iyengar.
\newblock Robust dynamic programming.
\newblock \emph{Math. Oper. Res.}, 30\penalty0 (2):\penalty0 257--280, 2005.

\bibitem[Jackson et~al.(2021)Jackson, Laurenti, Frew, and Lahijanian]{DBLP:conf/hybrid/JacksonLFL21}
John Jackson, Luca Laurenti, Eric~W. Frew, and Morteza Lahijanian.
\newblock Strategy synthesis for partially-known switched stochastic systems.
\newblock In \emph{{HSCC}}, pages 6:1--6:11. {ACM}, 2021.

\bibitem[Kazemi et~al.(2022)Kazemi, Majumdar, Salamati, Soudjani, and Wooding]{DBLP:journals/corr/abs-2206-08069}
Milad Kazemi, Rupak Majumdar, Mahmoud Salamati, Sadegh Soudjani, and Ben Wooding.
\newblock Data-driven abstraction-based control synthesis.
\newblock \emph{CoRR}, abs/2206.08069, 2022.

\bibitem[Khalil and Grizzle(2002)]{khalil2002nonlinear}
Hassan~K Khalil and Jessy~W Grizzle.
\newblock \emph{Nonlinear systems}, volume~3.
\newblock Prentice hall Upper Saddle River, NJ, 2002.

\bibitem[Kwiatkowska et~al.(2011)Kwiatkowska, Norman, and Parker]{DBLP:conf/cav/KwiatkowskaNP11}
Marta~Z. Kwiatkowska, Gethin Norman, and David Parker.
\newblock {PRISM} 4.0: Verification of probabilistic real-time systems.
\newblock In \emph{{CAV}}, volume 6806 of \emph{LNCS}, pages 585--591. Springer, 2011.

\bibitem[Lahijanian et~al.(2015)Lahijanian, Andersson, and Belta]{DBLP:journals/tac/LahijanianAB15}
Morteza Lahijanian, Sean~B. Andersson, and Calin Belta.
\newblock Formal verification and synthesis for discrete-time stochastic systems.
\newblock \emph{{IEEE} Trans. Autom. Control.}, 60\penalty0 (8):\penalty0 2031--2045, 2015.

\bibitem[Larsen and Skou(1991)]{DBLP:journals/iandc/LarsenS91}
Kim~Guldstrand Larsen and Arne Skou.
\newblock Bisimulation through probabilistic testing.
\newblock \emph{Inf. Comput.}, 94\penalty0 (1):\penalty0 1--28, 1991.

\bibitem[Lavaei et~al.(2022{\natexlab{a}})Lavaei, Soudjani, Abate, and Zamani]{LSAZ21}
Abolfazl Lavaei, Sadegh Soudjani, Alessandro Abate, and Majid Zamani.
\newblock Automated verification and synthesis of stochastic hybrid systems: {A} survey.
\newblock \emph{Autom.}, 146:\penalty0 110617, 2022{\natexlab{a}}.

\bibitem[Lavaei et~al.(2022{\natexlab{b}})Lavaei, Soudjani, Frazzoli, and Zamani]{lavaei2022constructing}
Abolfazl Lavaei, Sadegh Soudjani, Emilio Frazzoli, and Majid Zamani.
\newblock Constructing mdp abstractions using data with formal guarantees.
\newblock \emph{IEEE Control Systems Letters}, 7:\penalty0 460--465, 2022{\natexlab{b}}.

\bibitem[Lavaei et~al.(2023)Lavaei, Soudjani, and Frazzoli]{lavaei2023compositional}
Abolfazl Lavaei, Sadegh Soudjani, and Emilio Frazzoli.
\newblock A compositional dissipativity approach for data-driven safety verification of large-scale dynamical systems.
\newblock \emph{{IEEE} Trans. Autom. Control.}, 68\penalty0 (12):\penalty0 7240--7253, 2023.

\bibitem[Makdesi et~al.(2021)Makdesi, Girard, and Fribourg]{DBLP:conf/adhs/MakdesiGF21}
Anas Makdesi, Antoine Girard, and Laurent Fribourg.
\newblock Efficient data-driven abstraction of monotone systems with disturbances.
\newblock In \emph{{ADHS}}, volume~54 of \emph{IFAC-PapersOnLine}, pages 49--54. Elsevier, 2021.

\bibitem[Mathiesen et~al.(2023)Mathiesen, Calvert, and Laurenti]{DBLP:journals/csysl/MathiesenCL23}
Frederik~Baymler Mathiesen, Simeon~C. Calvert, and Luca Laurenti.
\newblock Safety certification for stochastic systems via neural barrier functions.
\newblock \emph{{IEEE} Control. Syst. Lett.}, 7:\penalty0 973--978, 2023.

\bibitem[Mathiesen et~al.(2024)Mathiesen, Haesaert, and Laurenti]{mathiesen2024}
Frederik~Baymler Mathiesen, Sofie Haesaert, and Luca Laurenti.
\newblock Scalable control synthesis for stochastic systems via structural imdp abstractions, 2024.

\bibitem[Meggendorfer et~al.(2024)Meggendorfer, Weininger, and Wienh{\"{o}}ft]{DBLP:journals/corr/abs-2404-05424}
Tobias Meggendorfer, Maximilian Weininger, and Patrick Wienh{\"{o}}ft.
\newblock What are the odds? improving the foundations of statistical model checking.
\newblock \emph{CoRR}, abs/2404.05424, 2024.

\bibitem[Mitchell(2007)]{DBLP:conf/hybrid/Mitchell07}
Ian~M. Mitchell.
\newblock Comparing forward and backward reachability as tools for safety analysis.
\newblock In \emph{{HSCC}}, volume 4416 of \emph{LNCS}, pages 428--443. Springer, 2007.

\bibitem[Nejati et~al.(2023)Nejati, Lavaei, Jagtap, Soudjani, and Zamani]{DBLP:journals/tac/NejatiLJSZ23}
Ameneh Nejati, Abolfazl Lavaei, Pushpak Jagtap, Sadegh Soudjani, and Majid Zamani.
\newblock Formal verification of unknown discrete- and continuous-time systems: {A} data-driven approach.
\newblock \emph{{IEEE} Trans. Autom. Control.}, 68\penalty0 (5):\penalty0 3011--3024, 2023.

\bibitem[Newcombe(1998)]{newcombe1998two}
Robert~G Newcombe.
\newblock Two-sided confidence intervals for the single proportion: comparison of seven methods.
\newblock \emph{Statistics in medicine}, 17\penalty0 (8):\penalty0 857--872, 1998.

\bibitem[Nilim and Ghaoui(2005)]{DBLP:journals/ior/NilimG05}
Arnab Nilim and Laurent~El Ghaoui.
\newblock Robust control of markov decision processes with uncertain transition matrices.
\newblock \emph{Oper. Res.}, 53\penalty0 (5):\penalty0 780--798, 2005.

\bibitem[Peruffo and Mazo(2023)]{DBLP:journals/csysl/PeruffoM23}
Andrea Peruffo and Manuel Mazo.
\newblock Data-driven abstractions with probabilistic guarantees for linear {PETC} systems.
\newblock \emph{{IEEE} Control. Syst. Lett.}, 7:\penalty0 115--120, 2023.

\bibitem[Pnueli(1977)]{DBLP:conf/focs/Pnueli77}
Amir Pnueli.
\newblock The temporal logic of programs.
\newblock In \emph{{FOCS}}, pages 46--57. {IEEE} Computer Society, 1977.

\bibitem[Puterman(1994)]{DBLP:books/wi/Puterman94}
Martin~L. Puterman.
\newblock \emph{Markov Decision Processes: Discrete Stochastic Dynamic Programming}.
\newblock Wiley Series in Probability and Statistics. Wiley, 1994.

\bibitem[Reissig et~al.(2017)Reissig, Weber, and Rungger]{DBLP:journals/tac/ReissigWR17}
Gunther Reissig, Alexander Weber, and Matthias Rungger.
\newblock Feedback refinement relations for the synthesis of symbolic controllers.
\newblock \emph{{IEEE} Trans. Autom. Control.}, 62\penalty0 (4):\penalty0 1781--1796, 2017.

\bibitem[Rober et~al.(2022)Rober, Katz, Sidrane, Yel, Everett, Kochenderfer, and How]{DBLP:journals/corr/abs-2209-14076}
Nicholas Rober, Sydney~M. Katz, Chelsea Sidrane, Esen Yel, Michael Everett, Mykel~J. Kochenderfer, and Jonathan~P. How.
\newblock Backward reachability analysis of neural feedback loops: Techniques for linear and nonlinear systems.
\newblock \emph{CoRR}, abs/2209.14076, 2022.

\bibitem[Romao et~al.(2023)Romao, Papachristodoulou, and Margellos]{romao2022tac}
Licio Romao, Antonis Papachristodoulou, and Kostas Margellos.
\newblock On the exact feasibility of convex scenario programs with discarded constraints.
\newblock \emph{{IEEE} Trans. Autom. Control.}, 68\penalty0 (4):\penalty0 1986--2001, 2023.

\bibitem[Salamati et~al.(2024)Salamati, Lavaei, Soudjani, and Zamani]{DBLP:journals/automatica/SalamatiLSZ24}
Ali Salamati, Abolfazl Lavaei, Sadegh Soudjani, and Majid Zamani.
\newblock Data-driven verification and synthesis of stochastic systems via barrier certificates.
\newblock \emph{Autom.}, 159:\penalty0 111323, 2024.

\bibitem[Salamon(2016)]{Salamon16}
Dietmar Salamon.
\newblock \emph{Measure and Integration}.
\newblock European Mathematical Society, 2016, 2016.

\bibitem[Sch{\"o}n et~al.(2024)Sch{\"o}n, Naseer, Wooding, and Soudjani]{schon2024data}
Oliver Sch{\"o}n, Shammakh Naseer, Ben Wooding, and Sadegh Soudjani.
\newblock Data-driven abstractions via binary-tree {G}aussian processes for formal verification.
\newblock \emph{IFAC-PapersOnLine}, 58\penalty0 (11):\penalty0 115--122, 2024.

\bibitem[Soudjani and Abate(2013)]{DBLP:journals/siamads/SoudjaniA13}
Sadegh Esmaeil~Zadeh Soudjani and Alessandro Abate.
\newblock Adaptive and sequential gridding procedures for the abstraction and verification of stochastic processes.
\newblock \emph{{SIAM} J. Appl. Dyn. Syst.}, 12\penalty0 (2):\penalty0 921--956, 2013.

\bibitem[Soudjani et~al.(2015)Soudjani, Gevaerts, and Abate]{FAUST15}
Sadegh Esmaeil~Zadeh Soudjani, Caspar Gevaerts, and Alessandro Abate.
\newblock {FAUST} \({}^{\mbox{ 2}}\) : Formal abstractions of uncountable-state stochastic processes.
\newblock In \emph{{TACAS}}, volume 9035 of \emph{LNCS}, pages 272--286. Springer, 2015.

\bibitem[Stipanovic et~al.(2003)Stipanovic, Hwang, and Tomlin]{DBLP:conf/eucc/StipanovicHT03}
Dusan~M. Stipanovic, Inseok Hwang, and Claire~J. Tomlin.
\newblock Computation of an over-approximation of the backward reachable set using subsystem level set functions.
\newblock In \emph{{ECC}}, pages 300--305. {IEEE}, 2003.

\bibitem[Suilen et~al.(2025)Suilen, Badings, Bovy, Parker, and Jansen]{Suilen2025}
Marnix Suilen, Thom Badings, Eline~M. Bovy, David Parker, and Nils Jansen.
\newblock \emph{Robust Markov Decision Processes: A Place Where AI and Formal Methods Meet}, pages 126--154.
\newblock Springer Nature Switzerland, Cham, 2025.
\newblock ISBN 978-3-031-75778-5.
\newblock \doi{10.1007/978-3-031-75778-5\\_7}.

\bibitem[Summers and Lygeros(2010)]{DBLP:journals/automatica/SummersL10}
Sean Summers and John Lygeros.
\newblock Verification of discrete time stochastic hybrid systems: {A} stochastic reach-avoid decision problem.
\newblock \emph{Autom.}, 46\penalty0 (12):\penalty0 1951--1961, 2010.

\bibitem[Tabuada(2009)]{DBLP:books/daglib/0032856}
Paulo Tabuada.
\newblock \emph{Verification and Control of Hybrid Systems - {A} Symbolic Approach}.
\newblock Springer, 2009.

\bibitem[Thulin(2014)]{Thulin_2014}
Måns Thulin.
\newblock The cost of using exact confidence intervals for a binomial proportion.
\newblock \emph{Electronic Journal of Statistics}, 8\penalty0 (1), January 2014.
\newblock ISSN 1935-7524.
\newblock \doi{10.1214/14-ejs909}.

\bibitem[van Huijgevoort et~al.(2023)van Huijgevoort, Sch{\"{o}}n, Soudjani, and Haesaert]{Huijgevoort2023SySCoRe}
Birgit van Huijgevoort, Oliver Sch{\"{o}}n, Sadegh Soudjani, and Sofie Haesaert.
\newblock Syscore: Synthesis via stochastic coupling relations.
\newblock In \emph{{HSCC}}, pages 13:1--13:11. {ACM}, 2023.

\bibitem[Wolff et~al.(2012)Wolff, Topcu, and Murray]{DBLP:conf/cdc/WolffTM12}
Eric~M. Wolff, Ufuk Topcu, and Richard~M. Murray.
\newblock Robust control of uncertain markov decision processes with temporal logic specifications.
\newblock In \emph{{CDC}}, pages 3372--3379. {IEEE}, 2012.

\bibitem[Yang et~al.(2022)Yang, Zhang, Jeannin, and Ozay]{DBLP:journals/tcad/YangZJO22}
Liren Yang, Hang Zhang, Jean{-}Baptiste Jeannin, and Necmiye Ozay.
\newblock Efficient backward reachability using the minkowski difference of constrained zonotopes.
\newblock \emph{{IEEE} Trans. Comput. Aided Des. Integr. Circuits Syst.}, 41\penalty0 (11):\penalty0 3969--3980, 2022.

\bibitem[Yin et~al.(2019)Yin, Packard, Arcak, and Seiler]{DBLP:conf/amcc/YinPAS19}
He~Yin, Andrew~K. Packard, Murat Arcak, and Peter~J. Seiler.
\newblock Finite horizon backward reachability analysis and control synthesis for uncertain nonlinear systems.
\newblock In \emph{{ACC}}, pages 5020--5026. {IEEE}, 2019.

\bibitem[Zikelic et~al.(2023)Zikelic, Lechner, Henzinger, and Chatterjee]{DBLP:conf/aaai/ZikelicLHC23}
Dorde Zikelic, Mathias Lechner, Thomas~A. Henzinger, and Krishnendu Chatterjee.
\newblock Learning control policies for stochastic systems with reach-avoid guarantees.
\newblock In \emph{{AAAI}}, pages 11926--11935. {AAAI} Press, 2023.

\end{thebibliography}

\ifappendix%
\appendix%
\section {Correctness of the abstraction}
\label{app:correctness}

In this appendix, we show the correctness of the IMDP abstraction defined in \cref{sec:abstraction} for solving \cref{prob:Problem1}.
This correctness proof is based on an extension of an \emph{alternating simulation relation} for stochastic systems.
Alternating simulation relations were originally developed by~\cite{DBLP:conf/concur/AlurHKV98} as a variant of simulation relations that can be used to solve control problems~\citep{DBLP:books/daglib/0032856}.
Alternating simulation relations are also closely related to \emph{feedback refinement relations}~\citep{DBLP:journals/tac/ReissigWR17}, and we refer to the work by~\cite{DBLP:journals/corr/abs-2410-06083} for a more detailed classification of such relations.
The behavioural relation defined below can also be seen as a variant of the notions defined by~\cite{DBLP:journals/iandc/HermannsPSWZ11} and \cite{DBLP:journals/iandc/LarsenS91} for stochastic systems.

Intuitively, we want to certify that all behaviours from one model (the abstract IMDP) can be \emph{matched} by another model (the dynamical system) \emph{under some policy}.
This property is captured by the following~definition.

\begin{definition}[Probabilistic alternating simulation relation~\citep{DBLP:conf/eucc/BadingsRA024}]
    \label{def:relation}
    A function $\mathcal{T}: \RR^n \to \States$ induces a \emph{probabilistic alternating simulation relation} from an IMDP $\imdp = \IMDP$ to a system $\system$ as in~\cref{eq:DTSS}~if
    \begin{enumerate}
        \setlength\itemsep{0em}
        \item[(1)] for the initial states, we have $\initState = \mathcal{T}(x_I)$, and
        \item[(2)] for all $x \in \RR^n$ with $s_i = \mathcal{T}(x)$ and for all $a_j \in \Actions(s_i)$, there exists an input $u \in \mathcal{U}$ such that
        \begin{equation}
            \label{}
            \qquad\enskip \eta(\hat{x}, \mathcal{T}^{-1}(s')) \in \transfuncImdp(s_i,a_j)(s'), \quad \forall s' \in \States, \, \forall \hat{x} \in R_j(\lambda_{i \to j}).
        \end{equation}
    \end{enumerate}
\end{definition}

\noindent
Condition~(2) requires that, for all $x \in \RR^n$ and $a_j \in \Actions(s_i)$, where $s_i = \mathcal{T}(x)$, there exists an input $u \in \mathcal{U}$ such that all possible probabilistic behaviours of the system $\system$ is (informally speaking) contained in that of $\imdp$.
We use the common notation of writing $\imdp \preceq_\mathcal{T} \system$ if \definitionref{def:relation} holds~\citep{DBLP:conf/concur/AlurHKV98}.

\begin{remark}[Comparison of relations]
Let us make the following remarks on \definitionref{def:relation}:
\begin{enumerate}
    \item Simulation relations are usually defined with a binary relation $R \subset \RR^n \times \States$ between the two models. Here, we directly define the relation using the abstraction function $\mathcal{T}$. Note, however, that the abstraction function uniquely generates a binary relation defined as $R = \{ (x,s) \in \RR^n \times \States : \mathcal{T}(x) = s \}$.
    \item A straight extension of the alternating simulation relation from~\citet[Def.~4.19]{DBLP:books/daglib/0032856} would require that for all $a_j \in \Actions(s_i)$, there exists an input $u \in \mathcal{U}$ such that the probability distributions over successor states (viewed through the abstraction function $\mathcal{T}$) are equivalent. As our abstract model is an IMDP, we instead require that the probability distribution over successor states in the concrete system $\system$ \emph{is contained} in the probability intervals of the IMDP.
    \item In the scope of safety problems, one is typically interested in showing that $\system \preceq_\mathcal{T} \imdp$, i.e., all behaviours of the concrete system is contained in the abstract IMDP. Note, however, that here we require that $\imdp \preceq_\mathcal{T} \system$, because for the reach-avoid specifications we consider, all possible behaviours of the IMDP need to be matched by the concrete system.
\end{enumerate}
\end{remark}

\begin{lemma}
    The IMDP abstraction $\imdp = \IMDP$ obtained from the abstraction function $\mathcal{T} \colon \RR^n \to \States$ induces a probabilistic alternating simulation relation from $\imdp$ to $\system$, i.e., $\imdp \preceq_\mathcal{T} \system$.
\end{lemma}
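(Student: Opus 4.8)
The plan is to verify directly the two defining conditions of \definitionref{def:relation}, using the construction of $\imdp$ given in \cref{subsec:IMDP_definition}. Condition~(1) is immediate: by construction we set $\initState \coloneqq \mathcal{T}(x_I)$, so the initial states already agree. The substance of the lemma lies in condition~(2), which I would split into two independent claims: (a)~for every $x \in \RR^n$ with $s_i = \mathcal{T}(x)$ and every enabled action $a_j \in \Actions(s_i)$, there exists an input $u \in \mathcal{U}$ steering the nominal dynamics into the target set $R_j(\lambda_{i \to j})$; and (b)~for \emph{every} point $\hat{x} \in R_j(\lambda_{i \to j})$ and every $s' \in \States$, the value $\eta(\hat{x}, \mathcal{T}^{-1}(s'))$ lies inside the probability interval $\transfuncImdp(s_i, a_j)(s')$.

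For claim~(a), I would unwind the definition of the enabled actions. Since $a_j \in \Actions(s_i)$, \cref{eq:enabled_actions} gives $\mathcal{T}^{-1}(s_i) \subseteq \Pre(R_j(\lambda_{i \to j}))$. As $x \in \mathcal{T}^{-1}(s_i)$, we have $x \in \Pre(R_j(\lambda_{i \to j}))$, and the definition of the backward reachable set in \cref{eq:BRS_set} then yields an input $u \in \mathcal{U}$ with $f(x,u) \in R_j(\lambda_{i \to j})$. This is precisely the matching input required by condition~(2), and it is the step where the backward-reachability construction of the actions does its work.

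For claim~(b), I would appeal to the min/max structure of the transition function. Fix any $s' \in \States$ and any $\hat{x} \in R_j(\lambda_{i \to j})$. Trivially,
\[
\min_{\bar{x} \in R_j(\lambda_{i \to j})} \eta\big(\bar{x}, \mathcal{T}^{-1}(s')\big) \;\leq\; \eta\big(\hat{x}, \mathcal{T}^{-1}(s')\big) \;\leq\; \max_{\bar{x} \in R_j(\lambda_{i \to j})} \eta\big(\bar{x}, \mathcal{T}^{-1}(s')\big),
\]
so by \cref{eq:transition_function} the value $\eta(\hat{x}, \mathcal{T}^{-1}(s'))$ falls in the interval $[\plow(s_i, a_j, s'), \pupp(s_i, a_j, s')] = \transfuncImdp(s_i, a_j)(s')$. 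I would also note that $\eta(\hat{x}, \mathcal{T}^{-1}(\cdot))$ is a genuine element of $\distr{\States}$: because $\{R_1, \ldots, R_v\} \cup \{R_\star\}$ partitions $\RR^n$, the probabilities $\eta(\hat{x}, \mathcal{T}^{-1}(s'))$ sum to one over $s' \in \States$, so the containment exhibits a bona fide distribution lying inside $\transfuncImdp(s_i, a_j)$, consistent with the standing assumption that this set is nonempty.

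I do not expect a deep obstacle: the argument is essentially a matter of chaining the definitions of $\Pre$, the enabled actions in \cref{eq:enabled_actions}, and the transition function in \cref{eq:transition_function}. The only subtlety worth care is the logical role of the quantifiers in condition~(2): the existential input $u$ certifies that the abstract action is \emph{realizable} by the concrete system, whereas the interval containment must hold \emph{uniformly} over all $\hat{x} \in R_j(\lambda_{i \to j})$ (not merely at $\hat{x} = f(x,u)$), reflecting that the IMDP intervals are built to be robust against the exact landing point of the nominal successor within the target set. Keeping these two roles separate is what makes the argument clean.
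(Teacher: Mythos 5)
Your proposal is correct and follows essentially the same route as the paper's proof: condition~(1) holds by the definition of $\initState$, and condition~(2) is obtained by combining the definition of enabled actions via $\Pre(R_j(\lambda_{i \to j}))$ (yielding the matching input $u$) with the min/max construction of $\transfuncImdp$ in \cref{eq:transition_function} (yielding the interval containment for every $\hat{x}$ in the target set). Your version merely spells out the quantifier structure and the fact that $\eta(\hat{x},\mathcal{T}^{-1}(\cdot))$ is a genuine distribution, which the paper leaves implicit.
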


\begin{proof}
    Condition (1) in \definitionref{def:relation} is satisfied by the definition of the initial state of the IMDP.
    For condition (2), pick any $x \in \RR^n$ and any $a_j \in \Actions(s_i)$, where $s_i = \mathcal{T}(x)$.
    By construction, action $a_j$ enabled in state $s_i$ implies the existence of an input $u \in \mathcal{U}$ such that $f(x,u) \in R_j(\lambda_{i \to j})$.
    Furthermore, for every $s' \in \States$ and $\hat{x} \in R_j(\lambda_{i \to j})$, it must hold that $\eta(\hat{x}, \mathcal{T}^{-1}(s')) \in \transfuncImdp(s_i,a_j)(s')$, which is satisfied by taking the min/max over $\eta$ in \cref{eq:transition_function}.
    Thus, the claim follows.
\end{proof}

The existence of a probabilistic alternating simulation relation can be used to solve \cref{prob:Problem1} based on the finite abstraction, as stated in the following theorem.

\begin{theorem}[Policy synthesis~\citep{DBLP:journals/jair/BadingsRAPPSJ23}]
    \label{thm:existence_controller_appendix}
    Let $\imdp$ be the IMDP abstraction for the dynamical system $\system$.
    For every IMDP scheduler $\scheduler \in \schedulerSpace^\imdp$, there exists a policy $\policy \in \policySpace^\system$ for $\system$~such~that
    \begin{equation}
        \label{eq:existence_controller_appendix}
        \min_{P \in \transfuncImdp} \satprob^\imdp_{\scheduler,P}(\sGoal, \sUnsafe,\horizon)
        \, \leq \,
        \satprob^\system_{\policy}(\xGoal, \xUnsafe,\horizon).
    \end{equation}
\end{theorem}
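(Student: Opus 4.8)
The plan is to leverage the probabilistic alternating simulation relation $\imdp \preceq_\mathcal{T} \system$ established in the preceding lemma, and to take as witness the refined policy $\policy$ of \cref{eq:refined_controller}. Recall that this refined policy is well-defined precisely because $a_j \in \Actions(s_i)$ guarantees, for every $x \in \mathcal{T}^{-1}(s_i)$, the existence of an input $u \in \mathcal{U}$ with $f(x,u) \in R_j(\lambda_{i \to j})$. I would express both reach-avoid probabilities through their Bellman (value-iteration) characterisations and then prove the inequality by a single backward induction over the horizon, suppressing the time index of the Markovian scheduler and of its refinement for readability.

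Concretely, for each time-to-go $t \in \{0,\ldots,\horizon\}$ I would let $W_t^\imdp(s)$ denote the worst-case (robust) probability of satisfying the specification within $t$ remaining steps from IMDP state $s$ under $\scheduler$, and $W_t^\system(x)$ the analogous reach-avoid probability from concrete state $x$ under $\policy$. Under the dynamic uncertainty model, $W_t^\imdp$ obeys the recursion in which goal states are absorbing-accepting, unsafe states absorbing-rejecting, and otherwise one takes $\min_{\mu \in \transfuncImdp(s,\scheduler(s))} \sum_{s'} \mu(s') W_{t-1}^\imdp(s')$; the concrete value $W_t^\system$ obeys the same recursion with the one-step expectation $\mathbb{E}_w[W_{t-1}^\system(f(x,\policy(x))+w)]$ in place of the robust minimisation. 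The claim to establish by induction is that $W_t^\imdp(\mathcal{T}(x)) \le W_t^\system(x)$ for all $x \in \RR^n$. The base case $t=0$ and the absorbing cases follow from the chosen under/overapproximations: $\mathcal{T}(x) \in \sGoal$ forces $x \in \xGoal$ and $\mathcal{T}(x) \notin \sUnsafe$ forces $x \notin \xUnsafe$, so whenever the IMDP value equals $1$ the concrete value equals $1$ as well, while whenever the IMDP value equals $0$ the inequality is trivial (in particular, the overapproximation of unsafe states can only decrease the IMDP value).

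The inductive step is the crux. For a non-absorbing state $x$ with $s = \mathcal{T}(x)$ and $a_j = \scheduler(s)$, I would set $\hat{x} = f(x,\policy(x)) \in R_j(\lambda_{i\to j})$ and form the distribution $\mu^\star(s') \coloneqq \eta(\hat{x}, \mathcal{T}^{-1}(s'))$. Two facts drive the argument: first, $\mu^\star$ is a genuine distribution over $\States$ because the regions partition $\RR^n$; second, condition~(2) of \cref{def:relation} (verified in the lemma via the min/max construction of \cref{eq:transition_function}) guarantees $\mu^\star \in \transfuncImdp(s,a_j)$. Decomposing the one-step expectation over the partition cells and applying the induction hypothesis cell-by-cell gives
\begin{align*}
    W_t^\system(x) &= \sum_{s' \in \States} \int_{\{w\,:\,\hat{x}+w \,\in\, \mathcal{T}^{-1}(s')\}} \!\! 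W_{t-1}^\system(\hat{x}+w)\, d\Prob(w) \\
    &\ge \sum_{s' \in \States} \mu^\star(s')\, W_{t-1}^\imdp(s') \ge \min_{\mu \in \transfuncImdp(s,a_j)} \sum_{s' \in \States} \mu(s')\, W_{t-1}^\imdp(s') = W_t^\imdp(s),
\end{align*}
where the middle inequality uses $W_{t-1}^\system(\hat{x}+w) \ge W_{t-1}^\imdp(\mathcal{T}(\hat{x}+w)) = W_{t-1}^\imdp(s')$ on each cell, and the final minimisation is valid since $\mu^\star$ is feasible. Evaluating the induction at $x = x_I$, $s = \initState = \mathcal{T}(x_I)$, and $t = \horizon$ yields exactly \cref{eq:existence_controller_appendix}; the infinite-horizon case follows by letting $t \to \infty$ and using monotone convergence of the value functions to their fixed points.

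The main obstacle I anticipate is not any single inequality but the careful bookkeeping of the reach-avoid semantics within the recursion, in particular ensuring that the absorbing treatment of goal and unsafe states remains consistent across the two models under the asymmetric under/overapproximation, and justifying that the robust reach-avoid probability $\min_{P} \satprob^\imdp_{\scheduler,P}$ coincides with the fixed point of the step-wise $\min$-recursion under the dynamic adversary model (a standard but essential property of IMDPs). The measure-theoretic decomposition of the expectation over partition cells additionally requires the regions to be Borel and their boundaries to carry zero probability, which is exactly what \cref{assumptions} provides.
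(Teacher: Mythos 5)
Your proposal is correct and follows essentially the same route as the paper: both rest on the probabilistic alternating simulation relation $\imdp \preceq_\mathcal{T} \system$ established in the preceding lemma, together with the refined policy of \cref{eq:refined_controller} as the witness. The only difference is one of detail --- the paper delegates the final step to the cited preservation of PCTL (reach-avoid) satisfaction under such relations, whereas you spell out the backward value-iteration induction that underlies that cited result.
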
 

The proof of \theoremref{thm:existence_controller_appendix} uses the fact that the existence of a probabilistic alternating simulation relation $\imdp \preceq_\mathcal{T} \system$ preserves the satisfaction of \emph{probabilistic computation tree logic} (PCTL) specifications, which subsume reach-avoid specifications~\citep{DBLP:journals/iandc/HermannsPSWZ11}.
For MDP abstractions, this preservation of satisfaction probabilities holds with equality, whereas for IMDPs (like we have), this holds with inequality as in \cref{eq:existence_controller_appendix}.
For further details, we refer to~\cite{DBLP:journals/jair/BadingsRAPPSJ23}.

\section{Proof of \theoremref{thm:Jacobian}}
\label{app:Jacobian}

Let us denote by $f^{(i)}$ the $i^\text{th}$ component of the vector function $f \colon \RR^n \times \mathcal{U} \to \RR^n$.
Using the mean value theorem~\citep{apostol1981}, we know there exists $c \in [0, 1]$ such that
\begin{equation*}
    f(x_1, u_\ell)^{(i)} - f(x_2, u_\ell)^{(i)} = (x_1 - x_2)^\top \nabla{f^{(i)}(c x_1 + (1-c)x_2, u_\ell)}.   
\end{equation*}
Consequently, it holds that
\begin{equation*}
\begin{split}
    |f(x_1, u_\ell)^{(i)} - f(x_2, u_\ell)^{(i)}| &= |(x_1 - x_2)^\top \nabla{f^{(i)}(c x_1 + (1-c)x_2, u_\ell)}|
    \\
    &= \sum_{j = 1}^{n} |(x_1 - x_2)^{(j)} \cdot\nabla{f^{(i)}(c x_1 + (1-c)x_2, u_\ell)}^{(j)}|   
    \\
    &\leq \sum_{j = 1}^{n} |x_1 - x_2|^{(j)} \cdot |\nabla{f^{(i)}(c x_1 + (1-c)x_2, u_\ell)}|^{(j)}  
    \\
    &= \sum_{j = 1}^{n} |x_1 - x_2|^{(j)} \cdot \max_{c \in [0, 1]}|\nabla{f^{(i)}(c x_1 + (1-c)x_2, u_\ell)}|^{(j)}
    \\
    &\leq \sum_{j = 1}^{n} |x_1 - x_2|^{(j)} \cdot \max_{x \in R_l}|\nabla{f^{(i)}(x, u_\ell)}|^{(j)}
    \\
    &\leq |x_1 - x_2|^\top \max_{x \in R_l}{|\nabla{f^{(i)}(x, u_\ell)}|}.
\end{split}
\end{equation*}
In other words, we have for all $i = 1,\ldots,n$ that
\begin{equation}
    \label{eq:proof_Jacobian}
    |f(x_1, u_\ell)^{(i)} - f(x_2, u_\ell)^{(i)}| \leq |x_1 - x_2|^\top \max_{x \in R_l}{|\nabla{f^{(i)}(x, u_\ell)}|}.
\end{equation}
Generalising \cref{eq:proof_Jacobian} to all $n$ dimensions of the vector function $f$ yields \theoremref{thm:Jacobian}, which concludes the proof.

\section{Proof of \theoremref{thm:underapproximation}}
\label{app:underapproximation}
We will use \cref{thm:Jacobian} to show that, for every $y \in A_j(x)$, it holds that $y \in \Pre(R_j(\lambda_{i\rightarrow j}))$.
Specifically, we have that
\begin{equation*}
\begin{split}
    y \in A_j(x) \Leftrightarrow \big\lVert J^+(R_i) \cdot | x - y | \big\rVert_\infty \leq r_j(x', \lambda_{i \to j})
    \\
    \Rightarrow \big\lVert f(x, u_\ell) - f(y, u_\ell)\big\rVert_\infty \leq \big\lVert J^+(R_i) \cdot | x - y | \big\rVert_\infty \leq r_j(x', \lambda_{i \to j}) 
    \\
    \Rightarrow \big\lVert x' - f(y, u_\ell)\big\rVert_\infty \leq r_j(x', \lambda_{i \to j}) 
    \\
    \Leftrightarrow f(y, u_\ell) \in B_{r_j(x', \lambda_{i \to j})}^\infty(x')
    \\
    \Rightarrow f(y, u_\ell) \in R_j(\lambda_{i \to j})
    \Rightarrow y \in \Pre(R_j(\lambda_{i \to j})).
\end{split}
\end{equation*}
Thus, $A_j(x) \subseteq \Pre(R_j(\lambda_{i \to j}))$, which concludes the proof.

\section{Details of Computing Probability Intervals}
\label{app:probability_intervals}

Recall from \cref{sec:probability_intervals} that $\check{N}_{i,j}(s')$ and $\hat{N}_{i,j}(s')$ are samples of the binomial distributions $\Bin(N,\check{P}_{i,j}(s'))$ and $\Bin(N,\hat{P}_{i,j}(s'))$, respectively.
Equivalently, we can write $\check{N}_{i,j}(s')$ and $\hat{N}_{i,j}(s')$ in a more explicit form, by \emph{counting} the number of \emph{successes} for a given set of noise samples $\{w^{(1)},\ldots,w^{(N)}\} \in \Omega^N$.
This leads to the following more explicit definition of these quantities.

\begin{definition}[Counting samples]
    \label{def:counting_samples}
    Let $\{w^{(1)},\ldots,w^{(N)}\} \in \Omega^N$ be a set of $N \in \NN$ i.i.d. samples from the noise. We define the sample counts $\check{N}_{i,j}(s')$ and $\hat{N}_{i,j}(s')$ as follows:
    \begin{align*}
    \check{N}_{i,j}(s') &= \big| \big\{ \ell \in \{1,\ldots,N\} : R_j(\lambda_{i \to j}) + w^{(\ell)} \subset R_\ell \big\} \big|,
    \\
    \hat{N}_{i,j}(s') &= \big| \big\{ \ell \in \{1,\ldots,N\} : R_j(\lambda_{i \to j}) + w^{(\ell)} \cap R_\ell \neq \emptyset \big\} \big|.
\end{align*}
\end{definition}

In practice, we can thus count the numbers of samples fully contained in $R_\ell$ (giving $\check{N}_{i,j}(s')$) and those with nonempty intersection with $R_\ell$ (giving $\hat{N}_{i,j}(s'))$), which we then plug into \theoremref{thm:PAC_interval}.

\section{Experiment Details}
\label{app:Experiments}

In this appendix, we give further details about the dynamics and results of our numerical experiments.
We remark that even though we define the distribution of the stochastic noise in each benchmark explicitly, our abstraction technique only requires sampling access to this distribution.
Moreover, even though we use uniform and Gaussian noise distributions in these benchmarks, our abstraction technique can handle any distribution satisfying \cref{assumptions}.

\subsection{Car parking}

\paragraph{System's dynamics.} We consider a car with nonlinear control in the 2D plane.     
The state variables are the $x$ and $y$ position of a car, such that the state at discrete time $k$ is $[x_k, y_k]^\top \in \RR^2$. The velocity and angle of the car can be controlled separately, using $u_k = [v_k, \theta_k] \in \mathcal{U} = [-0.1, 0.1] \times [-\pi, \pi]$. The dynamics of the system in discrete time are defined as
\begin{align*}
    x_{k+1} &= x_k + 10 \delta v_k \cos(\theta_k) + \zeta_1,
    \\
    y_{k+1} &= y_k + 10 \delta v_k \sin(\theta_k) + \zeta_2.
\end{align*}
The noise is sampled from a uniform distribution with bounds $\zeta \in U([-0.55, 0.55] \times [-0.55, 0.55])$.
Note that the dynamics are linear in the state but nonlinear in the inputs.

\paragraph{Reach-avoid specification.}
We synthesise a controller for the following reach-avoid specification over an infinite horizon, $\horizon = \infty$. The goal set is $\xGoal = [5, 7] \times [5, 7]$, and there are three unsafe sets:
\begin{align*}
    \xUnsafe^1 &= \RR^2 \setminus ([-10, 10] \times [-10, 10])
    \\
    \xUnsafe^2 &= [-8, 1] \times [-2, 0]
    \\
    \xUnsafe^3 &= [3, 5] \times [-8, 0],
\end{align*}
such that $\xUnsafe = \xUnsafe^1 \cup \xUnsafe^2 \cup \xUnsafe^3$.
The first unsafe set represents leaving the bounded portion of the state space $[-10,10] \times [-10,10]$, while the other two represent obstacles.

\paragraph{Abstraction.} 
We use a uniform partition with 40 $\times$ 40 states. In each state, we use 7 $\times$ 7 state samples, 7 $\times$ 21 control samples per state sample, and 7 $\times$ 7 voxels. 
We use a maximum scaling factor of $\Lambda = 1.5$ to compute the enabled actions and $N = 10\,000$ noise samples to compute the probability intervals using \theoremref{thm:PAC_interval}. 
The resulting IMDP has $1\,602$ states, $17\,770$ actions, and $452\,574$ transitions. 
Generating the abstraction takes approximately $24$ minutes, and computing an optimal policy using PRISM takes approximately $6$ seconds.

\subsection{Inverted pendulum}

\paragraph{System's dynamics.}
We consider the classical inverted pendulum benchmark.
The two-dimensional state $[\theta_k, \omega_k]^\top \in \RR^2$ models the angle $\theta_k$ and angular velocity $\omega_k$ of the pendulum at time step $k$, and the torque is constrained to $u_k \in \mathcal{U} = [-17.5, 17.5]$.
The dynamics are defined as
\begin{align*}
    \theta_{k+1} &= \theta_k + \delta \omega_k + \zeta_1,
    \\
    \omega_{k+1} &= \omega_k + \delta (-\frac{g}{l}) \sin(-\theta_k) + \frac{u}{m \cdot l^2} + \zeta_2,
\end{align*}
where $\delta = 0.1$s is the time discretization step, $g = 9.81\text{N}\cdot\frac{\text{m}^2}{\text{kg}^2}$ is the gravitational constant, and $l=1$m and $m=1$kg are the length and mass of the pendulum, respectively. The noise $\zeta \sim U( [-0.1, 0.1] \times [-0.2, 0.2] )$ is sampled from a uniform distribution.

\paragraph{Reach-avoid specification.}
We consider the infinite-horizon reach-avoid task to reach a state in $\xGoal = [-0.2, 0.2] \times [-0.4, 0.4]$ while avoiding states in $\xUnsafe = \RR^2 \setminus ([-\pi, \pi] \times [-2, 2])$, which models avoiding angular velocities above $+2$ or below $-2$. 

\paragraph{Abstraction.}
We create an abstraction based on a uniform partition into $32 \times 10$ states. In each state, we use 15 $\times$ 21 state samples, 15 $\times$ 21 control samples per state sample, and 15 $\times$ 15 voxels.
We use a maximum scaling factor of $\Lambda = 2.0$ to compute the enabled actions and $N = 10\,000$ noise samples to compute the probability intervals using \theoremref{thm:PAC_interval}.
The resulting IMDP has $322$ states, $4\,599$ actions, and $108\,193$ transitions.
Generating the abstraction takes approximately $10$ minutes, and computing an optimal policy using PRISM takes less than $1$ second.

\subsection{Harmonic oscillator with nonlinear damping}

\paragraph{System's dynamics.}
We study a harmonic oscillator with nonlinear damping.
The two-dimensional state $[x_k, v_k]^\top \in \RR^n$ models the position $x_k$ and velocity $v_k$ of the oscillator at discrete time step $k$.
The force $u_k \in \mathcal{U} = [-1, 1]$ is used to control the system. 
The dynamics are defined as
\begin{align*}
    x_{k+1} &= x_k + \delta v_k + \frac{\delta^2u}{2} + \zeta_1,
    \\
    v_{k+1} &= v_k - K \delta v_k^3 + \delta u_k + \zeta_2,
\end{align*}
where $K = 0.0075$ is the (nonlinear) damping coefficient and $\delta = 1$ is the time discretisation step.
The stochastic noise $\zeta_k$ is a Gaussian random variable, such that $\zeta \sim \mathcal{N}([0, 0]^\top, \diag{0.25, 0.25]})$, where $\mathcal{N}(\mu,\Sigma)$ denotes a Gaussian with mean $\mu$ and covariance $\Sigma$, and the diagonal matrix $\diag{z}$ is the square matrix with $z$ on the diagonal and zero elsewhere.

\paragraph{Reach-avoid specification.}
We consider the infinite-horizon reach-avoid task to reach a state in $\xGoal = [-1, 1] \times [-1, 1]$ while avoiding states in $\xUnsafe = \RR^2 \setminus ([-10, 10] \times [-10, 10])$, which models avoiding positions and velocities above $+10$ or below $-10$.

\paragraph{Abstraction.}
We construct the abstraction based on a uniform partition with $40 \times 40$ states. 
In each state, we use 11 $\times$ 11 state samples, 11 $\times$ 11 control samples per state sample, and 11 $\times$ 11 voxels. 
We use a maximum scaling factor of $\Lambda = 3.0$ to compute the enabled actions and $N = 10\,000$ noise samples to compute the probability intervals using \theoremref{thm:PAC_interval}.
The resulting IMDP has $1\,602$ states, $25\,929$ actions, and $669\,727$ transitions. Generating the abstraction takes approximately $55$ minutes, and computing an optimal policy using PRISM takes less than $1$ second.

\begin{table}[b!]
\floatconts
{tab:results}%
{\caption{The reach-avoid probabilities from a fixed initial state $x_I$ for the dynamical systems, using either the Clopper-Pearson interval (CP) or scenario approach (Scen.) to compute intervals with either $N=1\,000$ or $N=10\,000$ samples.\vspace{-2em}}}
{%
\scriptsize\begin{tabular}{llllll}
\toprule
\bfseries System & \bfseries N=1k, CP, $\mathbf{\Lambda > 1}$ & \bfseries N=10k, CP, $\mathbf{\Lambda > 1}$ & \bfseries N=1k, Scen., $\mathbf{\Lambda > 1}$  & \bfseries N=10k, Scen., $\mathbf{\Lambda > 1}$ & \bfseries N=10k, CP, $\mathbf{\Lambda = 1}$ \\
\midrule
Pendulum & 0.343 & 0.761 & 0.203 & 0.732 & 0.000 \\
Oscillator & 0.225 & 0.471 & 0.149 & 0.437 & 0.000 \\
Car Parking & 0.016 & 0.572 & 0.001 & 0.490 & 0.224 \\
\bottomrule
\end{tabular}
}
\end{table}

\begin{figure}[b!]
\floatconts
{fig:results_pendulum}%
{\vspace{-1.5em}\caption{Reach-avoid probabilities $\satprob^\system_{\policy}(\xGoal, \xUnsafe,\horizon)$ for pendulum with (a) probability intervals from \theoremref{thm:PAC_interval} and (b) the approach from \cite{DBLP:conf/aaai/BadingsA00PS22}, both with $N = 10\,000$ samples. Fig.~(c) shows simulated trajectories under the resulting policy from \cref{eq:refined_controller} for our method.}}%
{%
\subfigure[Clopper-Pearson intervals]{%
\label{fig:results_pendulum:CP}%
\includegraphics[height=4.2cm]{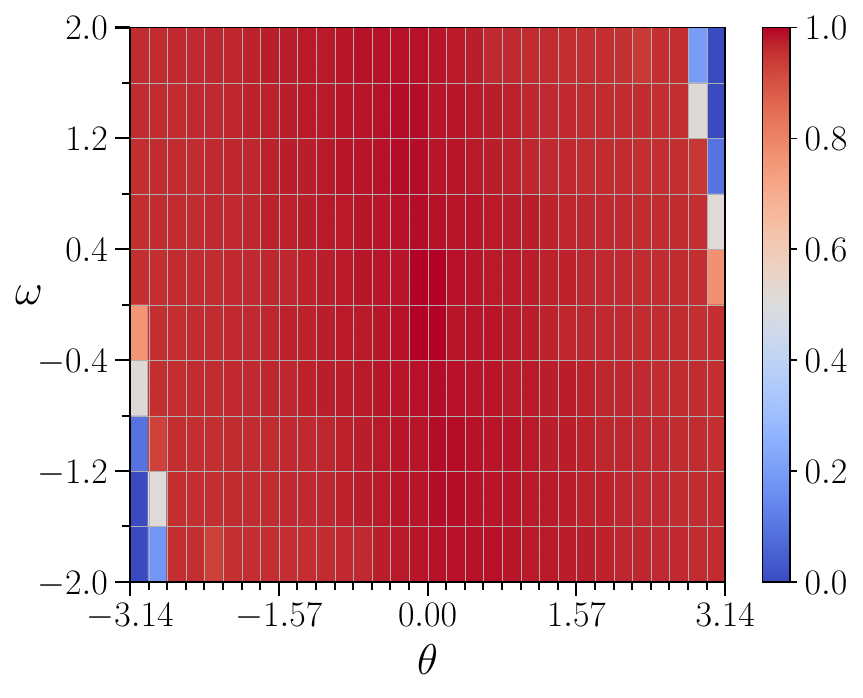}
}\hfill
\subfigure[Scenario approach intervals]{%
\label{fig:results_pendulum:scenario}%
\includegraphics[height=4.2cm]{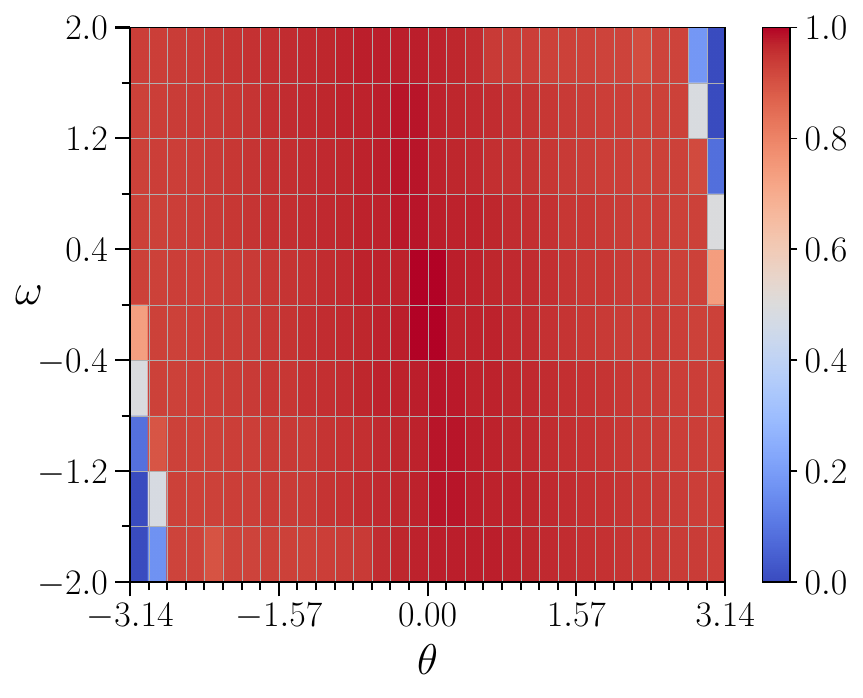}
}\hfill
\subfigure[Simulation under (a)]{%
\label{fig:results_pendulum:traj}%
\includegraphics[height=4.05cm]{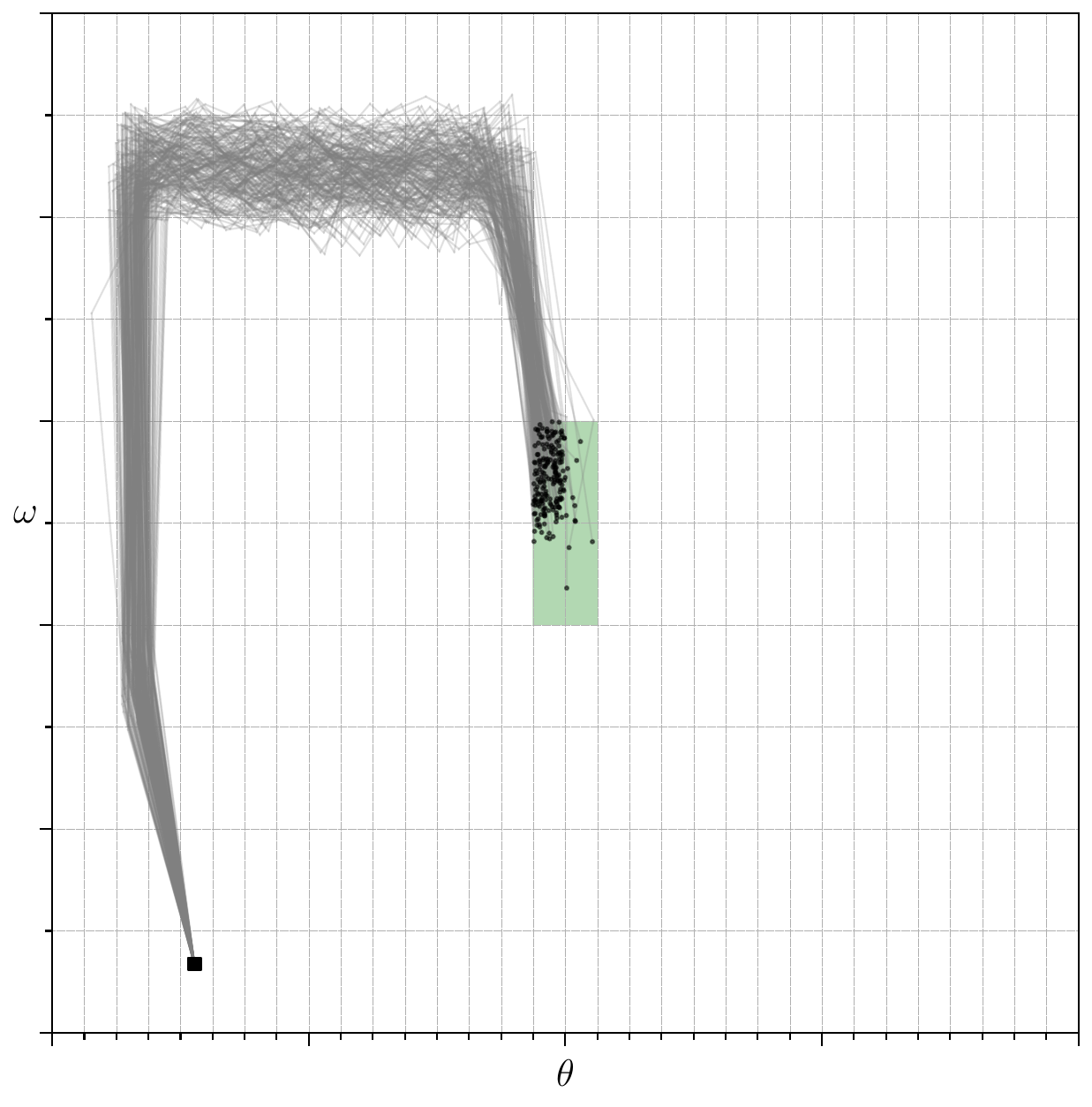}
}
}
\end{figure}

\begin{figure}[b!]
\floatconts
{fig:results_oscillator}%
{\vspace{-1.5em}\caption{Reach-avoid probabilities $\satprob^\system_{\policy}(\xGoal, \xUnsafe,\horizon)$ for the oscillator with (a) probability intervals from \theoremref{thm:PAC_interval} and (b) the approach from \cite{DBLP:conf/aaai/BadingsA00PS22}, both with $N = 10\,000$ samples. Fig.~(c) shows simulated trajectories under the resulting policy from \cref{eq:refined_controller} for our method.}}%
{%
\subfigure[Clopper-Pearson intervals]{%
\label{fig:results_oscillator:CP}%
\includegraphics[height=4.2cm]{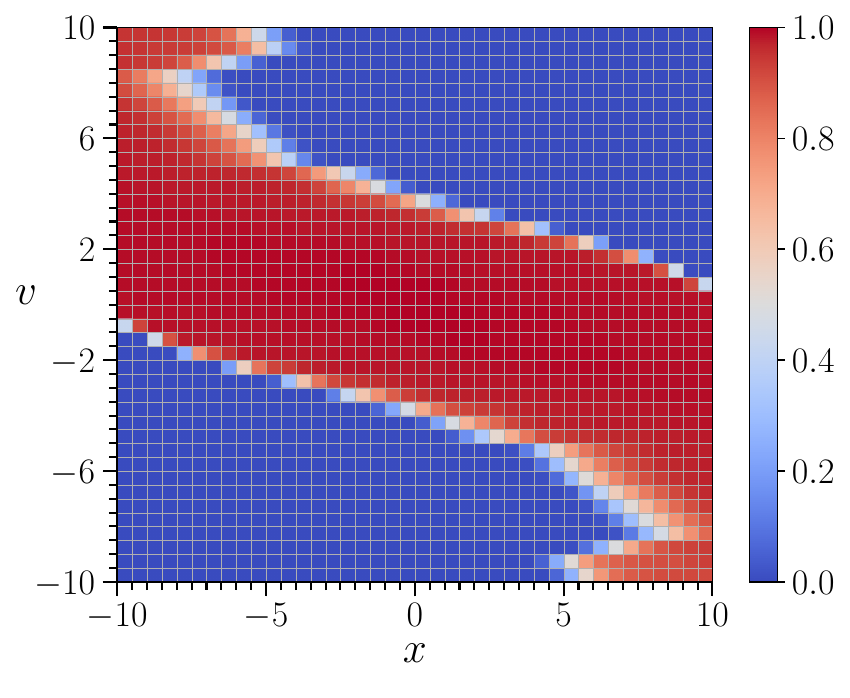}
}\hfill
\subfigure[Scenario approach intervals]{%
\label{fig:results_oscillator:scenario}%
\includegraphics[height=4.2cm]{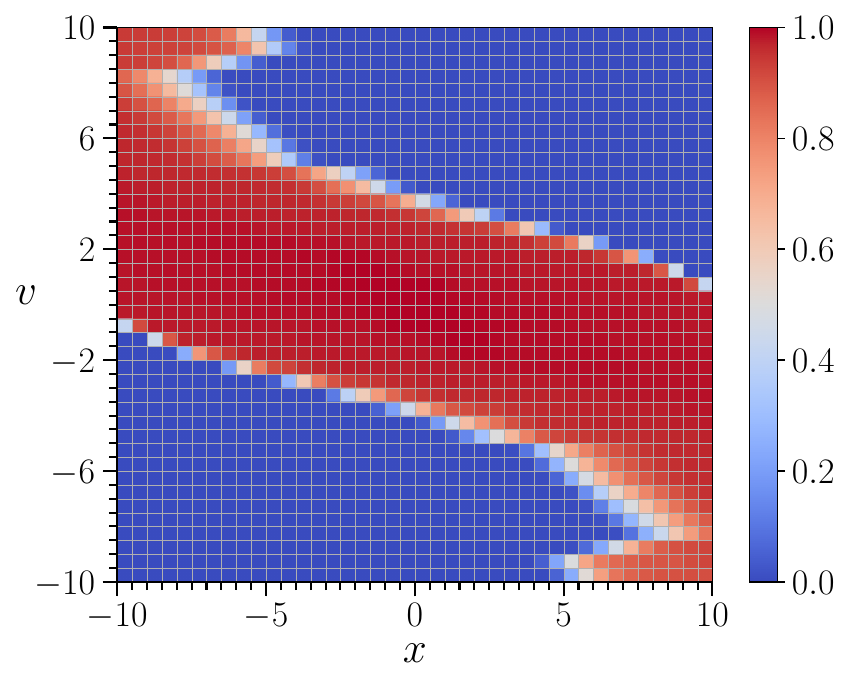}
}\hfill
\subfigure[Simulation under (a)]{%
\label{fig:results_oscillator:traj}%
\includegraphics[height=4.05cm]{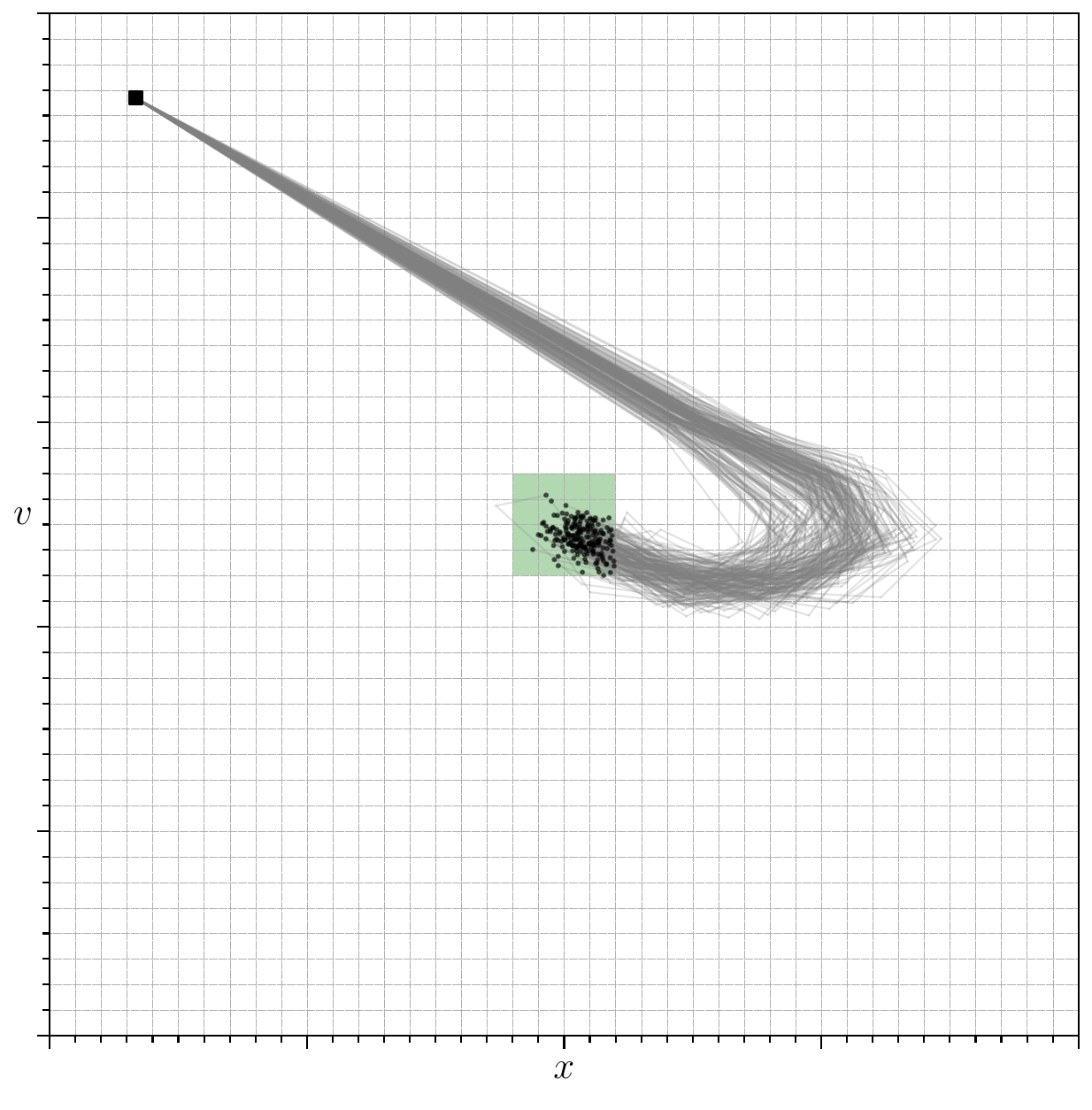}
}
}
\end{figure}

\newpage
\subsection{Experimental results}

We present all experimental results that we omitted from the main paper due to space limitations.
In \tableref{tab:results}, we present the reach-avoid probabilities across all benchmarks and cases, from a fixed initial state $x_I \in \RR^n$.
In summary, we observe higher reach-avoid probabilities for higher numbers of samples $N$ (used to compute probability intervals) and a higher $\Lambda$ (used to compute enabled actions).
Furthermore, the Clopper-Pearson interval indeed leads to tighter probability intervals than using the scenario approach.

\paragraph{Reach-avoid probabilities and trajectories.}
First, we present the heatmaps and trajectories analogous to \figureref{fig:results_car} for the other two benchmarks.
The corresponding results are presented in \figureref{fig:results_pendulum} (for the pendulum) \figureref{fig:results_oscillator} (for the harmonic oscillator).
Again, we observe that using the scenario approach leads to slightly weaker reach-avoid probabilities.

\begin{figure}[t!]
\floatconts
{fig:results_reducedLambda}%
{\vspace{-1.5em}\caption{Reach-avoid probabilities $\satprob^\system_{\policy}(\xGoal, \xUnsafe,\horizon)$ for all three benchmarks, with a reduced upper bound of $\Lambda = 1$ on each $\lambda_{i \to j}$.
We use \theoremref{thm:PAC_interval} with $N = 10\,000$ to compute probability intervals.
}}%
{%
\vspace{2em}
\subfigure[Car parking]{%
\label{fig:results_reducedLambda:car}%
\includegraphics[height=4cm]{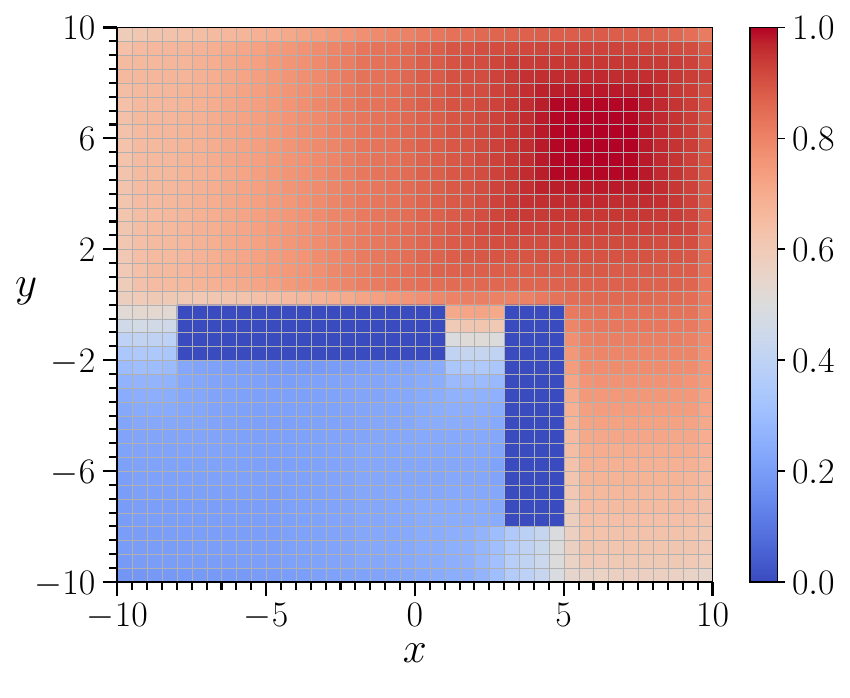}
}\hfill
\subfigure[Pendulum]{%
\label{fig:results_reducedLambda:pendulum}%
\includegraphics[height=4cm]{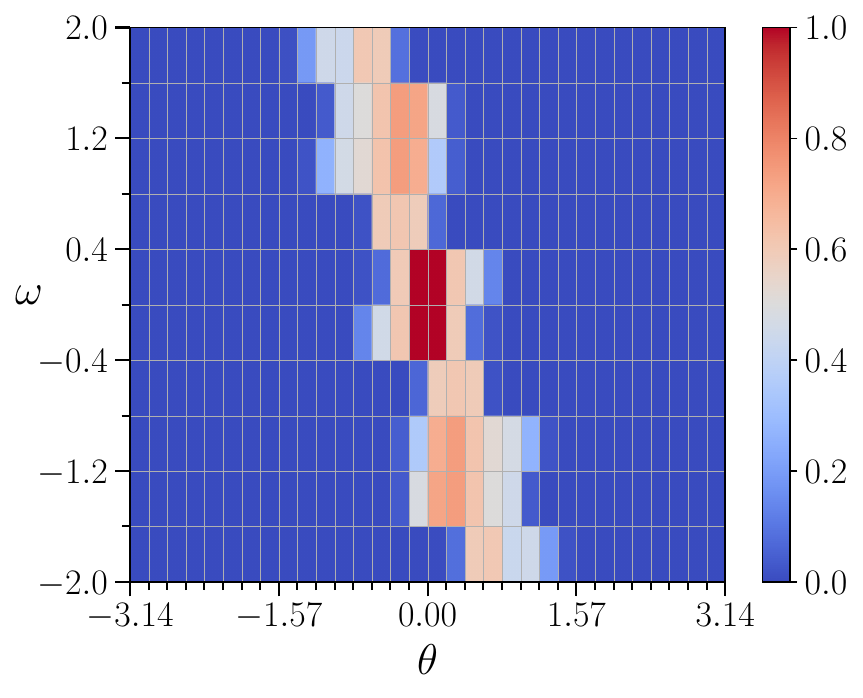}
}\hfill
\subfigure[Harmonic oscillator]{%
\label{fig:results_reducedLambda:oscillator}%
\includegraphics[height=4cm]{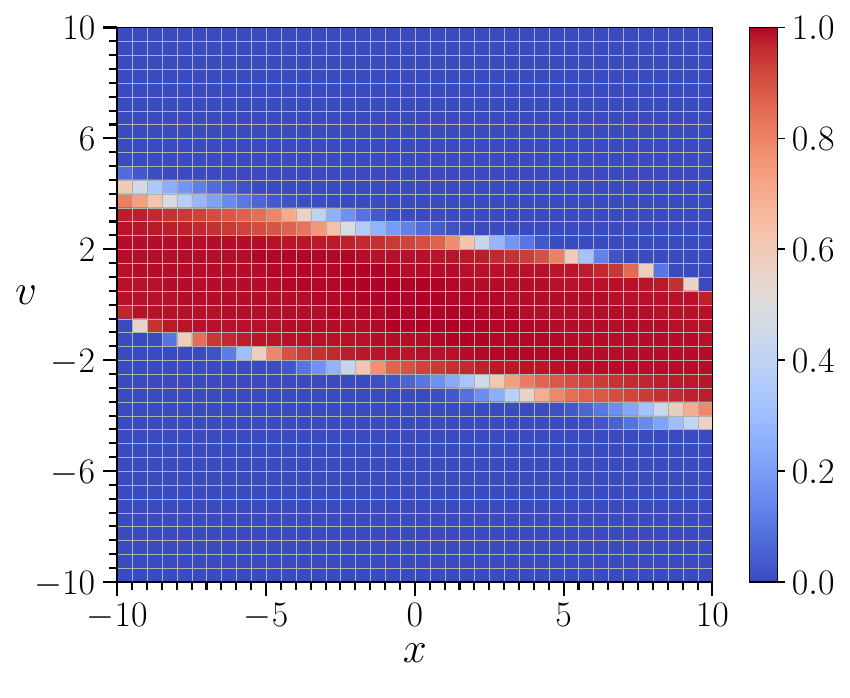}
}
}
\end{figure}

\paragraph{Reduced upper bound $\Lambda$.}
We investigate the effect of using a lower value for the hyperparameter $\Lambda$.
Recall that $\Lambda$ is the maximum allowed value of $\lambda_{i \to j}$ and thus controls the size of the backward reachable set of each IMDP action.
The heatmaps of the result reach-avoid probabilities for all three benchmarks are presented in \figureref{fig:results_reducedLambda}.
By comparing these results with those presented earlier for higher values of $\Lambda$, we observe that reducing $\Lambda$ reduces the reach-avoid probabilities that we obtain significantly.

\paragraph{Lower number of samples $N$.}
Finally, we investigate the effect of using a lower number of $N = 1\,000$ samples to compute the probability intervals. The corresponding heatmaps of the reach-avoid probability are shown in \figureref{fig:results_car_lowN,fig:results_pendulum_lowN,fig:results_oscillator_lowN}.
We observe that reducing the number of samples widens the resulting probability intervals significantly, resulting in lower reach-avoid probabilities.
In particular, when the number of samples is lower, the probability of reaching the (terminal) sink state is higher, which accumulates over each state transition of the system.

\begin{figure}[t!]
\floatconts
{fig:results_car_lowN}%
{\vspace{-1.5em}\caption{Reach-avoid probabilities $\satprob^\system_{\policy}(\xGoal, \xUnsafe,\horizon)$ for car parking with (a) probability intervals from \theoremref{thm:PAC_interval} and (b) the approach from \cite{DBLP:conf/aaai/BadingsA00PS22}, both with $N = 1\,000$ samples.}}%
{%
\subfigure[Clopper-Pearson intervals]{%
\label{fig:results_car_lowN:CP}%
\includegraphics[height=3.8cm]{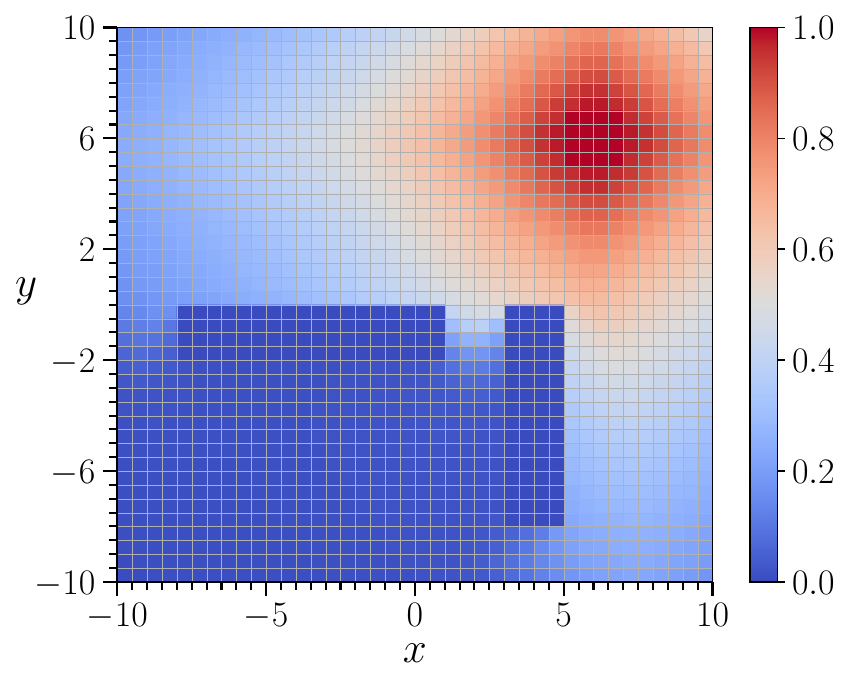}
}
\subfigure[Scenario approach intervals]{%
\label{fig:results_car_lowN:scenario}%
\includegraphics[height=3.8cm]{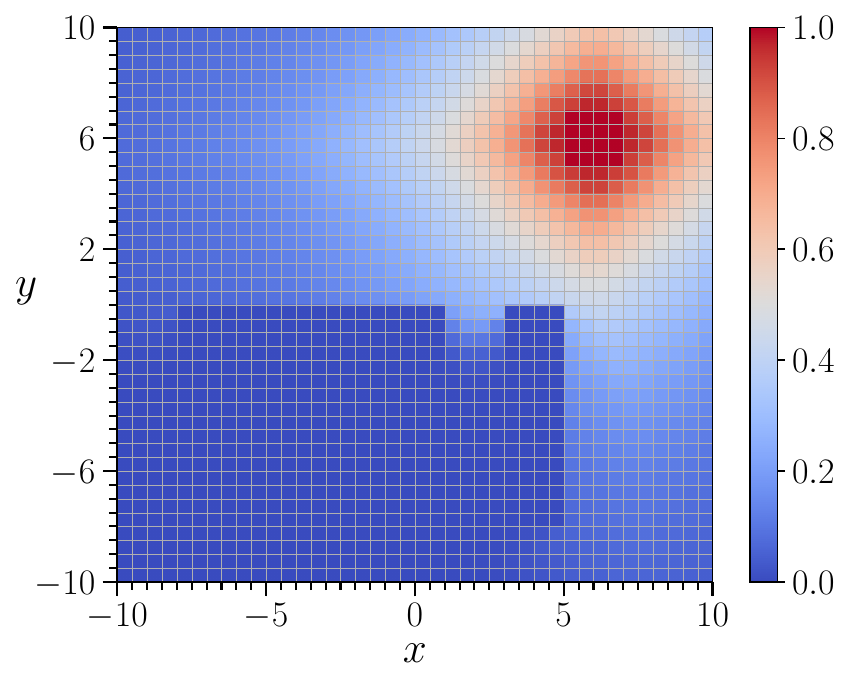}
}\hfill
}
\vspace{0.5em}
\floatconts
{fig:results_pendulum_lowN}%
{\vspace{-1.5em}\caption{Reach-avoid probabilities $\satprob^\system_{\policy}(\xGoal, \xUnsafe,\horizon)$ for pendulum with (a) probability intervals from \theoremref{thm:PAC_interval} and (b) the approach from \cite{DBLP:conf/aaai/BadingsA00PS22}, both with $N = 1\,000$ samples.}}%
{%
\subfigure[Clopper-Pearson intervals]{%
\label{fig:results_pendulum_lowN:CP}%
\includegraphics[height=3.8cm]{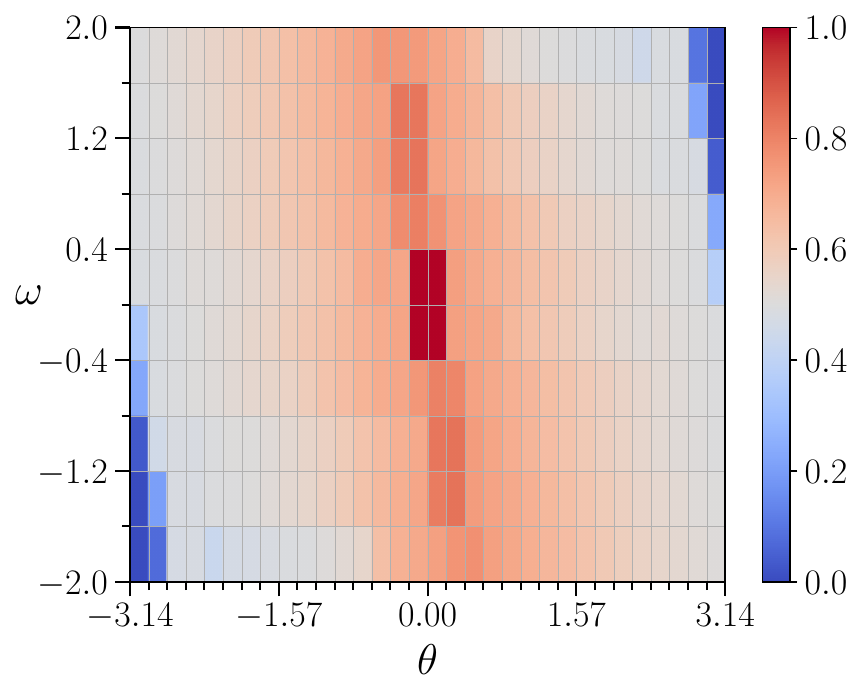}
}
\subfigure[Scenario approach intervals]{%
\label{fig:results_pendulum_lowN:scenario}%
\includegraphics[height=3.8cm]{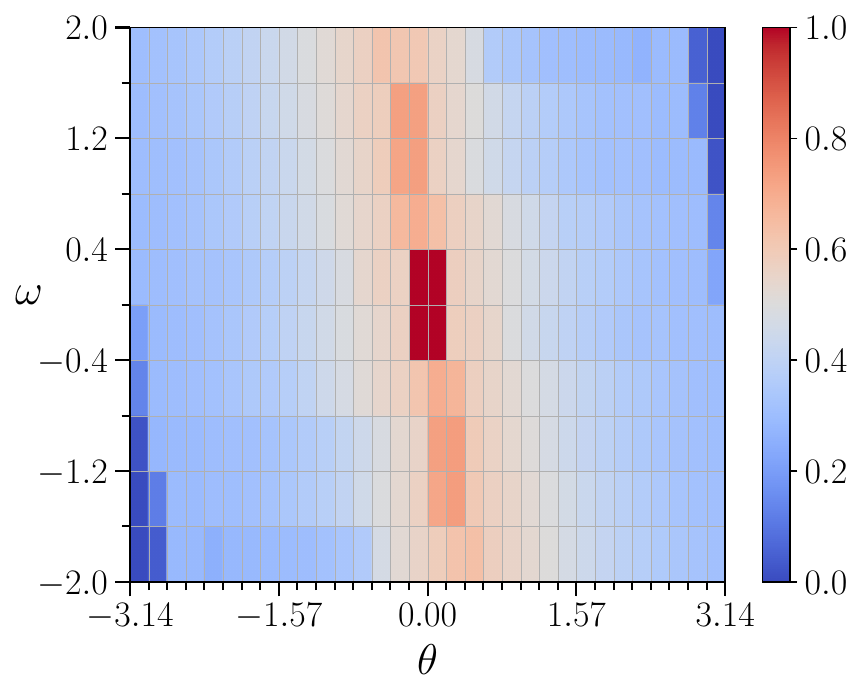}
}\hfill
}
\vspace{0.5em}
\floatconts
{fig:results_oscillator_lowN}%
{\vspace{-1.5em}\caption{Reach-avoid probabilities $\satprob^\system_{\policy}(\xGoal, \xUnsafe,\horizon)$ for the oscillator with (a) probability intervals from \theoremref{thm:PAC_interval} and (b) the approach from \cite{DBLP:conf/aaai/BadingsA00PS22}, both with $N = 1\,000$ samples.}}%
{%
\subfigure[Clopper-Pearson intervals]{%
\label{fig:results_oscillator_lowN:CP}%
\includegraphics[height=3.8cm]{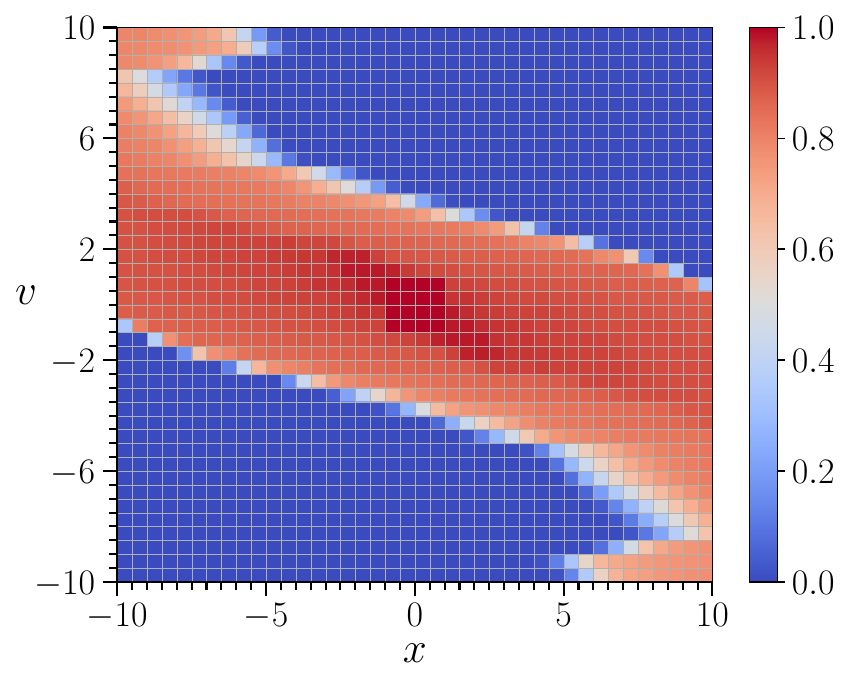}
}
\subfigure[Scenario approach intervals]{%
\label{fig:results_oscillator_lowN:scenario}%
\includegraphics[height=3.8cm]{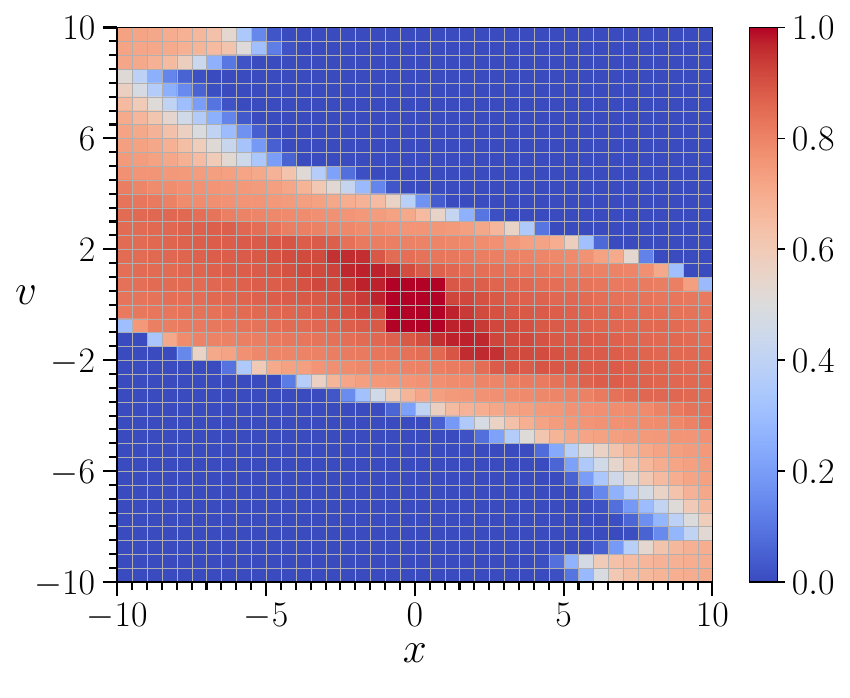}
}\hfill
}
\end{figure}%
\fi%
\end{document}